\newtheorem{thm}{Theorem}
\newtheorem{prop}[thm]{\bf Proposition}
\newtheorem{defi}{Definition}
\newtheorem{rem}{Remark}
\def\Nc{\mbox{$\mathcal N$}}
\def\Rc{\mbox{$\mathcal R$}}
\def\Zc{{\mathcal Z}}
\def\Eb{{\mathbb E}}
\def\Pb{{\mathbb P}}
\def\Rb{\mbox{$\mathbb R$}}
\def\1{{\mathbf 1} }
\def\mds{\medskip}
\title{Risk Budgeting Portfolios: Existence and Computation}
\author{Adil Rengim CETINGOZ\thanks{Université Paris 1 Panthéon-Sorbonne, Centre d’Economie de la Sorbonne, 106 Boulevard de l’Hôpital, 75642 Paris Cedex 13, France, adil-rengim.cetingoz@etu.univ-paris1.fr.} \quad Jean-David FERMANIAN\thanks{Ensae-Crest, 5 Avenue Le Chatelier, 91120 Palaiseau, France, 
jean-david.fermanian@ensae.fr.} \quad Olivier GU\'EANT\thanks{Université Paris 1 Panthéon-Sorbonne, Centre d’Economie de la Sorbonne, 106 Boulevard de l’Hôpital, 75642 Paris Cedex 13, France, olivier.gueant@univ-paris1.fr. \textit{Corresponding author.}}}      
\date{}
\begin{document}

\maketitle
\begin{abstract}
Modern portfolio theory has provided for decades the main framework
for optimizing portfolios. Because of its sensitivity to small changes
in input parameters, especially expected returns, the mean-variance
framework proposed by Markowitz (1952) has however been challenged by new construction methods that
are purely based on risk. Among risk-based methods, the most popular
ones are Minimum Variance, Maximum Diversification, and Risk Budgeting
(especially Equal Risk Contribution) portfolios. Despite some drawbacks, Risk
Budgeting is particularly attracting because of its versatility: based
on Euler's homogeneous function theorem, it can indeed be used with a wide
range of risk measures. This paper presents mathematical
results regarding the existence and the uniqueness of Risk Budgeting portfolios for a
very wide spectrum of risk measures and shows that, for many of them,
computing the weights of Risk Budgeting portfolios only requires a standard stochastic algorithm.
\end{abstract}
\vfill
\begin{flushleft}
\textbf{Keywords:} portfolio optimization, risk budgeting, risk
measures, volatility, expected shortfall, stochastic algorithm,
stochastic gradient descent.
\end{flushleft}
\newpage

\section{Introduction}
\label{Intro}

\medskip

Seventy years ago, Markowitz \citep{markowitz1952} transformed the
financial problem of asset allocation into a simple (quadratic)
mathematical optimization problem. His modern portfolio theory has opened
the quest for quantitative tools to better invest in financial markets and construct optimized investment portfolios. Markowitz' model was
soon generalized by Tobin \cite{tobin1958liquidity} who introduced a
risk-free asset into the framework and proved what is now called a
mutual fund separation theorem. Then, \cite{lintner1965security},
\cite{mossin1966equilibrium}, \cite{sharpe1964capital} and
\cite{Treynor1962JackT} independently developed the capital asset
pricing model~(CAPM). Undoubtedly, Markowitz' modern portfolio theory,
Tobin's mutual fund separation theorem and the CAPM have shaped the
asset management industry: diversification is the cornerstone of asset
allocation; volatility is the most popular measure of portfolio risk;
so-called ``$\alpha$'' and ``$\beta$'' are ubiquitous concepts to describe strategies,
not to talk about the recurrent debates between passive and active
investing.

\medskip

Although mathematically sound, Markowitz' idea of finding the asset
allocation which maximizes the portfolio expected return given a
variance constraint -- the so-called mean-variance optimization -- raises
practical issues. In particular, choosing (or estimating) the vector
of expected returns that one inputs in the mean-variance framework is
a difficult and very sensitive task as small differences in
expected returns may yield significant differences in the final
portfolio (see \cite{best1991sensitivity}).\footnote{The case of
covariance matrices also raises issues. There is an important literature on
covariance / correlation matrix cleansing (see \cite{ledoit2004honey}
for shrinkage methods, \cite{laloux1999noise} for eigenvalue clipping,
and \cite{bun2017cleaning} for the recent rotationally invariant estimators).}

\medskip

A large set of portfolio construction methods has been proposed to
mitigate these issues such as the famous Black-Litterman model
\cite{black1992global} that uses the CAPM as a baseline for the
expected returns. Some of them simply do not rely on any expected
return. The simplest example is that of equally-weighted portfolios
(see for instance \cite{demiguel2009optimal}) which is input-agnostic
and nevertheless (or, maybe, for that reason) considered an interesting benchmark beyond
traditional capitalization-weighted ones. Most quantitative portfolio
construction methods that are independent of expected returns are in
fact risk-based methods: they focus on risk mitigation and ignore (at
least mathematically) return maximization. Most of them rely only on the
covariance matrix of asset returns. The portfolio construction method
based on finding the weights that minimize the \textit{ex ante}
variance of the portfolio return is a typical example: it corresponds
to the least risky portfolio on Markowitz' efficient frontier --
the so-called Minimum Variance portfolio. The Most-Diversified
portfolio approach is based on maximizing the diversification ratio
introduced in \cite{choueifaty2008toward}. Most-Diversified portfolios
are appealing and their properties are analyzed in
\cite{choueifaty2013properties}. A third popular risk-based approach
is Risk Budgeting, which is at the core of this paper, and
which, broadly speaking, consists in allocating risk rather than
capital in line with given ``risk budgets''. Risk Budgeting has been studied
and largely advocated by Roncalli and his coauthors in an interesting
series of papers (e.g. \cite{bruder2016risk}, \cite{bruder2012managing},
\cite{lezmi2018portfolio}, \cite{maillard2010properties}, \cite{roncalli2014introducing}) and in the
reference book \cite{roncalli2013introduction}. One of the most
adopted Risk Budgeting methods is Equal Risk Contribution (ERC) which
corresponds to the case where the risk budgets are chosen  equal. The
popularity of ERC is due to the fact that it corresponds to the
``least-concentrated'' portfolio in terms of risk and
it is relatively insensitive to small errors in the covariance matrix
estimation, compared to the Most-Diversified and Minimum-Variance
portfolios (see \cite{demey2010risk}).

\medskip

As mentioned above, Risk Budgeting focuses on the contribution of each asset
to the portfolio risk, rather than on the portfolio risk
itself. Therefore, it relies on a mathematical framework to handle the
decomposition of the total portfolio risk into asset-wise risk contributions. Such a breakdown of
risk is performed thanks to Euler's homogeneous function theorem
because of its axiomatic justification (see \cite{roncalli2013introduction} and \cite{tasche2007capital}). In particular, Risk Budgeting is
versatile and relies solely on the positive homogeneity and the
differentiability of the chosen risk measure.

\medskip

Various frameworks have been proposed to build and study Risk Budgeting portfolios. The initial framework proposed in \cite{maillard2010properties} studies ERC portfolios when volatility is the risk measure. The authors prove that whenever the covariance matrix of asset returns is positive-definite, an ERC portfolio exists and is unique. For that purpose, they regard the equations defining a Risk Budgeting portfolio as the first order condition (up to rescaling) of a constrained minimization problem.\footnote{See also \cite{bruder2012managing} for an extension beyond ERC portfolios to general Risk Budgeting portfolios.} Extensions to other risk measures are present in various papers: \cite{roncalli2014introducing} reintroduces expected returns and deals with risk measures that are a linear combination of expected return and volatility and \cite{bruder2016risk} deals with Risk Budgeting when the risk measure is Expected Shortfall (see also \cite{lezmi2018portfolio}).

\medskip

In terms of computational methods, several approaches have been proposed. The reference methods rely (up to a rescaling factor) on the numerical approximation of the solution of an unconstrained convex minimization problem, i.e. the optimization of the Lagrangian associated with the usual constrained minimization problem of the above literature. Many gradient descent algorithms have been proposed. Beyond simple methods, a Nesterov acceleration technique is applied in \cite{spinu2013algorithm} and a cyclical coordinate descent one is proposed in \cite{griveau2013fast}. 
In most cases, the reference risk measure is the volatility of portfolio returns, or at least a risk measure that may be easily computed. For instance, when asset returns are distributed according to a Gaussian mixture, \cite{lezmi2018portfolio} discusses the case of Risk Budgeting with Expected Shortfall as a way to control the ``skewness risk'' of portfolios. When the risk measure and / or the distribution of asset returns are more complex (see for instance \cite{gava2021turning}), the above methods can be challenged by simulation-based numerical methods. This was the initial motivation of this paper.

\medskip

In fact, our goal in this paper is to provide a unique framework that applies to a wide spectrum of
risk measures. For that purpose, we first revisit some existence and uniqueness results about optimal portfolios. Then, for a large range of risk measures including deviation measures (like volatility) and spectral risk measures (like Expected Shortfall), we show that building Risk Budgeting portfolios boils down to using a standard stochastic gradient descent (SGD) approach.

\medskip
In Section~\ref{RB_existence_unicity}, we introduce some notations, define the Risk Budgeting problem and formally prove the existence and the uniqueness of a Risk Budgeting portfolio. In that section, we borrow a lot from the existing literature. In particular, our proof is based on the introduction of a convex minimization problem whose first order condition is (up to rescaling) the condition that defines Risk Budgeting portfolios. In Section~\ref{Stoch_gradient_descent_risk_budgeting}, we start by summarizing the current use of Expected Shortfall in the Risk Budgeting literature and extend the semi-analytic formula of \cite{lezmi2018portfolio} to the case of Student-t mixture distributions. We then show that Expected Shortfall does not necessarily need to be computed in the definition of the convex minimization problem because Expected Shortfall has itself a variational characterization in the form of a minimum: the famous Rockafellar-Uryasev formula (see \cite{rockafellar2000optimization}, for instance). In particular, for almost any distribution of asset returns, a stochastic gradient descent enables to compute the Risk Budgeting portfolio when Expected Shortfall is the risk measure. Section~\ref{Extending_framework} generalizes the approach and shows that the set of risk measures for which building a Risk Budgeting portfolio boils down to using a standard stochastic gradient descent is very large. Numerical examples are provided and discussed in Section~\ref{Numerical_results}.

\section{Risk Budgeting: definition, existence and uniqueness}
\label{RB_existence_unicity}

\subsection{Notations and statement of the problem}

Let us first define a probability space $(\Omega,\mathcal{F}, \mathbb{P})$. This probability space will be used throughout the paper and we denote by $L^0(\Omega,\mathbb{R}^d)$ the set of  $\mathbb R^d$-valued random variables, i.e. measurable functions defined on $\Omega$ with values in $\mathbb R^d$ ($d \ge 1$).

\medskip

Hereafter, risk measures are functions mapping a random variable (regarded as a loss) to a real number assessing the risk of this random variable. To be compatible with our analysis of the Risk Budgeting problem, we require that they satisfy two classical properties, as defined below:

\begin{defi}
A function $\rho : L^0(\Omega,\mathbb{R}) \longrightarrow \mathbb{R}$ is 
said to be an RB-compatible risk measure if it satisfies the following assumptions:
\begin{eqnarray*}
\textbf{\textup{(PH)}} & \forall Z \in  L^0(\Omega,\mathbb{R}) , \forall \lambda\geq0 , \quad \rho(\lambda Z) = \lambda \rho(Z) & \text{(positive homogeneity)}\\
\textbf{\textup{(SA)}} & \forall Z_1,Z_2  \in  L^0(\Omega,\mathbb{R}) , \quad \rho(Z_1+Z_2) \leq \rho(Z_1)+\rho(Z_2) & \text{(sub-additivity)}.
\end{eqnarray*}
\end{defi}

As a consequence, an RB-compatible risk measure is convex. 

\begin{rem}
Most risk measures are in fact defined on a subset of  $L^0(\Omega,\mathbb{R})$, like $L^1(\Omega,\mathbb{R})$ or $ L^2(\Omega,\mathbb{R})$.
\end{rem}

\begin{rem}
In risk management, a classical concept is that of coherent risk measures (see~\cite{artzner1999coherent}). In addition to the two properties imposed in the above definition, coherence requires other properties:
\begin{eqnarray*}
 & \forall Z_1,Z_2  \in  L^0(\Omega,\mathbb{R}), \quad Z_1 \le Z_2  \implies \rho(Z_1) \leq \rho(Z_2)  & \text{(monotonicity)}\\
 & \forall (Z,c) \in  L^0(\Omega,\mathbb{R}) \times \mathbb{R}, \quad \rho(Z+c) = \rho(Z) + c  & \text{(translation invariance)}.
\end{eqnarray*}
 Coherent risk measures are RB-compatible risk measures but the converse is not true. In particular, volatility is an RB-compatible risk measure but not a coherent risk measure (see the discussion in Section \ref{Extending_framework}).
\end{rem}

\medskip

In order to define Risk Budgeting portfolios, let us consider a financial universe with $d$ assets. Their returns are stacked in a random vector $X$ with values in $\mathbb R^d$, i.e. $X \in L^0(\Omega,\mathbb{R}^d)$.
Portfolios are identified with vectors of weights, hereafter denoted by $\theta$, which belong to the simplex $\Delta_d = \{ \theta \in \mathbb{R}_+^d | \theta_1 + \ldots +\theta_d = 1\}$. 
To any $d$-dimensional random vector $X$ and any RB-compatible risk measure $\rho$, 
we can associate a function $\mathcal{R}_{\rho, X}$ (hereafter denoted by $\mathcal{R}$ when there is no ambiguity) defined by
\begin{alignat*}{2}
\mathcal{R}:\mathbb{R}^d_+ &\longrightarrow \mathbb{R} \\
y&\longmapsto\rho(-y'X).
\end{alignat*}
If the portfolio weights are $\theta \in \Delta_d$, then $\theta' X$ is the return of the portfolio and $\Rc (\theta)$ is the risk associated with this portfolio.
In what follows, we shall always assume that $\mathcal{R}$ is continuous on $\mathbb{R}_+^d$ and continuously differentiable on $(\mathbb{R}^*_+)^d$. 

\medskip

Because of the positive homogeneity of the risk measure $\rho$ and as an application of Euler's homogeneous function theorem, the risk associated with a portfolio represented by its vector of weights $\theta$ can be decomposed as
$\mathcal{R}(\theta) = \sum_{i=1}^d \theta_i \partial_i \mathcal{R}(\theta)$, where $\partial_i \mathcal{R}(\theta)$ is the partial derivative of $\mathcal{R}$ with respect to the variable~$\theta_i$ at point $\theta$. In the Risk Budgeting literature, the $i^{\text{th}}$ term of the above sum is called the risk contribution of asset $i$ to the total risk of the portfolio. The Risk Budgeting problem consists then in finding a vector of weights $\theta$ for which risk contributions are equal to desired proportions (represented by a vector of risk budgets $b$) of the total risk.\\ 
\begin{defi}
Let $\Delta^{>0}_d = \{ \theta \in (\mathbb{R}^*_+)^d | \theta_1 + \ldots +\theta_d = 1\}$ and let $b \in \Delta^{>0}_d$ be a vector of risk budgets. We say that a vector of weights $\theta \in \Delta^{>0}_d$ solves the Risk Budgeting problem $\text{RB}_{b}$ if and only if 
$$\quad \theta_i \partial_i \mathcal{R}(\theta) = b_i \mathcal{R}(\theta),$$ 
for every $i \in \{1,\ldots,d\}$.
\end{defi}

\medskip

Hereafter, we assume that $\mathcal{R}(\theta) > 0$ for any $\theta \in\Delta_d$, i.e., the risk of any long-only portfolio is positive.
This restriction does not mean that we restrict ourselves to loss-based risk measures (see~\cite{contduguestcoignard}). For risk measures like Expected Shortfall, it just means that the level is sufficiently high to prevent negative values.  This assumption is necessary to prove our main result (Theorem~\ref{thm_existence} below)  and makes sense in the context of Risk Budgeting and positive weights, where the overall risk of a portfolio has to be distributed across its individual components.

\subsection{Theoretical results on Risk Budgeting}

Given the above definition of Risk Budgeting portfolios, two questions naturally arise:
\begin{enumerate}
    \item the existence of a vector of weights $\theta$ that solves $\text{RB}_{b}$ for a vector of risk budgets $b \in \Delta_d^{>0}$;
    \item the uniqueness of such a vector of weights $\theta$. 
\end{enumerate}
The following theorem solves the first point. 
\begin{thm}
\label{thm_existence}
Let $b \in \Delta_d^{>0}$.
Let $g : \mathbb{R_+} \longrightarrow \mathbb{R}$ be a continuously differentiable convex and increasing function.
Let the function $\Gamma_g:(\Rb_+^*)^d \rightarrow \Rb$ be defined by 
$$ \Gamma_g: y \longmapsto g\big(\mathcal{R}(y)\big) - \sum_{i=1}^d b_i \log{y_i}.$$
There exists a unique minimizer $y^*$ of the function $\Gamma_g$ and $\theta^* := \frac{y^*}{\sum_{i=1}^d y^*_i} \in \Delta_d^{>0}$ solves $\text{RB}_{b}$.
\end{thm}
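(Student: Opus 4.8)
The plan is to follow the by-now classical variational route: establish that $\Gamma_g$ is strictly convex on its (convex, open) domain $(\mathbb{R}^*_+)^d$, that it attains its infimum, and then read off the Risk Budgeting equations from the first-order optimality condition at the minimizer after rescaling onto the simplex. For strict convexity, I would note that every RB-compatible risk measure is convex and $y\mapsto -y'X$ is linear, so $y\mapsto\mathcal{R}(y)=\rho(-y'X)$ is convex; as $g$ is convex and increasing, $g\circ\mathcal{R}$ is convex, while $y\mapsto-\sum_i b_i\log y_i$ is strictly convex (each $-\log$ is strictly convex and each $b_i>0$). Hence $\Gamma_g$ is strictly convex and therefore has \emph{at most one} minimizer.

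The heart of the argument is existence of a minimizer: since the domain is open and unbounded, I would show that $\Gamma_g$ blows up both at the boundary of $(\mathbb{R}^*_+)^d$ and at infinity, so that its sublevel sets are compact. The clean device is positive homogeneity: writing $y=t\theta$ with $t=\sum_i y_i>0$ and $\theta\in\Delta_d$, and using $\sum_i b_i=1$, one gets $\Gamma_g(y)=g\big(t\,\mathcal{R}(\theta)\big)-\log t-\sum_i b_i\log\theta_i$. Setting $m:=\min_{\Delta_d}\mathcal{R}>0$ (a continuous positive function on a compact set) and using that $g$ is increasing and each $\theta_i\le 1$, one obtains $\Gamma_g(y)\ge g(tm)-\log t$. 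A convex increasing (non-constant) $g$ admits a linear minorant with positive slope on a half-line, so $g(tm)-\log t\to+\infty$ as $t\to\infty$, while $-\log t\to+\infty$ as $t\to 0^+$; hence on any sublevel set $\{\Gamma_g\le c\}$ the scalar $t=\sum_i y_i$ stays in a compact subinterval of $(0,\infty)$. Feeding this back, $-\sum_i b_i\log\theta_i$ is then bounded on that sublevel set, and since each summand is nonnegative, every $\theta_i$ is bounded away from $0$; thus the sublevel set lies in a compact subset of $(\mathbb{R}^*_+)^d$, and being closed by continuity of $\Gamma_g$ it is compact. The infimum is therefore attained, and with strict convexity this yields a unique minimizer $y^*\in(\mathbb{R}^*_+)^d$, with $\theta^*=y^*/\sum_i y^*_i\in\Delta^{>0}_d$.

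Finally, the first-order condition. Since $y^*$ is an interior minimizer of the differentiable $\Gamma_g$, we have $\nabla\Gamma_g(y^*)=0$, i.e. $g'\big(\mathcal{R}(y^*)\big)\,\partial_i\mathcal{R}(y^*)=b_i/y^*_i$ for all $i$. Multiplying by $y^*_i$, summing over $i$, and invoking Euler's homogeneous function theorem $\sum_i y^*_i\,\partial_i\mathcal{R}(y^*)=\mathcal{R}(y^*)>0$ together with $\sum_i b_i=1$ gives $g'\big(\mathcal{R}(y^*)\big)\,\mathcal{R}(y^*)=1$; in particular $g'(\mathcal{R}(y^*))=1/\mathcal{R}(y^*)>0$, so $y^*_i\,\partial_i\mathcal{R}(y^*)=b_i\,\mathcal{R}(y^*)$ for every $i$. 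Since $\mathcal{R}$ is $1$-homogeneous and hence $\partial_i\mathcal{R}$ is $0$-homogeneous, with $t^*=\sum_i y^*_i$ we have $\mathcal{R}(y^*)=t^*\mathcal{R}(\theta^*)$ and $\partial_i\mathcal{R}(y^*)=\partial_i\mathcal{R}(\theta^*)$; substituting and cancelling $t^*$ yields $\theta^*_i\,\partial_i\mathcal{R}(\theta^*)=b_i\,\mathcal{R}(\theta^*)$ for all $i$, i.e. $\theta^*$ solves $\mathrm{RB}_b$.

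I expect the coercivity step to be the main obstacle: the boundary blow-up is routine, but controlling the behaviour as $\lVert y\rVert\to\infty$ requires exploiting positive homogeneity to collapse the problem to the scalar estimate $g(tm)-\log t\to+\infty$ and then comparing the growth of the convex increasing $g$ against the logarithmic term (which is where the non-degeneracy of $g$ genuinely enters); by contrast the strict-convexity and first-order-condition steps are essentially bookkeeping.
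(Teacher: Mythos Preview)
Your proof is correct and follows the same high-level variational route as the paper: strict convexity of $\Gamma_g$, coercivity to get existence, and the first-order condition plus Euler's theorem to recover the Risk Budgeting equations. The strict-convexity and first-order-condition steps match the paper almost verbatim.

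Where you differ is in the coercivity argument, and your version is the more streamlined one. The paper also decomposes $y=\lambda\theta$ along rays but proceeds indirectly: it first shows that for each $\theta$ the ray function $\gamma_{g,\theta}(\lambda)=\Gamma_g(\lambda\theta)$ has a minimizer $\lambda^*(\theta)$, then argues by contradiction (via the first-order condition $x_ng'(x_n)=1$ along a sequence) that $\theta\mapsto\lambda^*(\theta)$ is bounded, and finally constructs an explicit $\varepsilon$ to trap the minimum in a compact box $C_M\cap D_\varepsilon$. You instead produce the clean global lower bound $\Gamma_g(y)\ge g(tm)-\log t$ with $m=\min_{\Delta_d}\mathcal{R}>0$ and read off compactness of sublevel sets directly. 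Your route avoids the contradiction step and the explicit $\varepsilon$ construction; the paper's route has the minor advantage of not needing to name the minimum $m$ (it only uses $\mathcal{R}(\bar\theta)>0$ at a limit point), but both ultimately rest on the same two facts: positivity of $\mathcal{R}$ on the compact simplex and superlinear growth of a convex increasing $g$ relative to $\log$.
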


\begin{proof}
Since $\mathcal{R}(\theta) > 0$ for every $\theta \in \Delta_d^{>0}$ by assumption, \textbf{(PH)} implies 
$\mathcal{R}(y) > 0$ for every $ y \in \mathbb{(R_+^*)}^d$. Therefore, $\Gamma_g(y) = g\big(\mathcal{R}(y)\big) - \sum_{i=1}^d b_i \log{y_i}$ is well defined for all $y \in (\Rb_+^*)^d$.

\medskip
Let us then notice that $\Gamma_g$ is strictly convex  since $g$ is convex and increasing, $\mathcal{R}$ is convex, 
and $y \in (\Rb_+^*)^d \longmapsto - \sum_{i=1}^d b_i \log{y_i}$ is strictly convex.

\medskip
To prove the existence of a minimizer to the function $\Gamma_g$, for any $\theta \in \Delta_d^{>0}$, let us introduce the function $\gamma_{g,\theta}:\mathbb{R_+^*} \longrightarrow \mathbb{R}$ defined by
$$ \gamma_{g,\theta}: \lambda \longmapsto \Gamma_g(\lambda\theta) = g(\lambda\mathcal{R}(\theta)) - \sum_{i=1}^d b_i \log{\theta_i} - \log\lambda.$$

We first notice that $\lim_{\lambda \to 0^+} \gamma_{g,\theta}(\lambda) = \lim_{\lambda \to +\infty} \gamma_{g,\theta}(\lambda) = +\infty$.\footnote{For the latter point, note that the convexity of $g$ implies $g(\lambda\mathcal{R}(\theta)) \geq g(c) + g'(c)(\lambda\mathcal{R}(\theta) - c)$ for any $c$ such that $g'(c) > 0$ (and there exists such a $c$ since $g$ is continuously differentiable and strictly increasing).} By continuity, there exists $\lambda^*(\theta)$ such that $\gamma_{g,\theta}(\lambda) \geq \gamma_{g,\theta}(\lambda^*(\theta))$ for every $\lambda > 0$. Let us show by contradiction that $\theta\longmapsto \lambda^*(\theta)$ is bounded.

\medskip 

For that purpose, assume the existence of a sequence $(\theta_n)_n$ with values in $\Delta_d^{>0}$ such that $\lambda_n := \lambda^*(\theta_n) \to +\infty$. We can then extract a subsequence $(\theta_{\varphi(n)})_n$ that converges towards $\bar\theta \in \Delta_d$ and such that $\lambda_{\varphi(n)} \to +\infty$. 
For all $n$, $\lambda_n$ satisfies the first order condition $\gamma_{g,\theta_{\varphi(n)}}'(\lambda_{\varphi(n)}) = 0$, i.e.
$$\mathcal{R}(\theta_{\varphi(n)}) g'\big(\lambda_{\varphi(n)}\mathcal{R}(\theta_{\varphi(n)})\big) - \frac{1}{\lambda_{\varphi(n)}} = 0.$$
Therefore, if $x_n := \lambda_{\varphi(n)}\mathcal{R}(\theta_{\varphi(n)})$, then we have $x_ng'(x_n)=1$ for all $n$. However, because $\lim_{n \to +\infty}\lambda_{\varphi(n)} = +\infty$ and $\lim_{n \to +\infty}\mathcal{R}(\theta_{\varphi(n)}) = \mathcal{R}(\bar\theta) > 0$, we have $\lim_{n \to +\infty} x_n = +\infty$. This contradicts $x_ng'(x_n)=1$ for all $n$, because $g$ is a convex and increasing function.
Therefore, we have proved that $\theta\longmapsto \lambda^*(\theta)$ is bounded: there exists a constant $M$ such that $\lambda^*(\theta) \leq M$.

\medskip
For every $ y \in \mathbb{(R_+^*)}^d$, we have
$$\Gamma_g(y) = \gamma_{g,y/\sum_i y_i}\Big(\sum_i y_i\Big) \geq \gamma_{g,y/\sum_i y_i}\bigg(\lambda^*\Big(\frac{y}{\sum_i y_i}\Big)\bigg) = \Gamma_g\bigg(\frac{y}{\sum_i y_i}\lambda^*\Big(\frac{y}{\sum_i y_i}\Big)\bigg) .$$
Setting $C_M := \Big\{y \in \mathbb{(R_+^*)}^d | \sum_{i=1}^d y_i \leq M \Big\}$, we deduce
$$\inf_{y \in \mathbb{(R_+^*)}^d} \Gamma_g(y) = \inf_{y \in C_M} \Gamma_g(y).$$

\medskip
Now, let us consider an arbitrary vector $\bar{y} \in C_M$ and define  
$$\varepsilon := \min\bigg(\min_{i} \bar{y_i}, \min_i\exp\Big(\frac{1}{b_i}\big(g(0) - (1-b_i)\log{M} - \Gamma_g(\bar{y})\big)\Big)\bigg).$$
For any $y \in C_M$, if there exists $j \in \big\{1,\dots,d\big\}$ such that $y_j<\varepsilon$, then, by definition of $\varepsilon$,
\begin{align*}
\Gamma_g(y) &= g\big(\mathcal{R}(y)\big) - \sum_{i=1}^d b_i \log{y_i} \geq g(0) - \sum_{i=1}^d b_i \log{y_i}\\
&\geq g(0) - \sum_{i\neq j} b_i \log{y_i} - b_j\log\varepsilon
\geq g(0) - \log M \sum_{i\neq j} b_i  - b_j\log\varepsilon\\
&\geq g(0) - \log M (1 - b_j) - b_j\log\varepsilon
\geq \Gamma_g(\bar{y}).
\end{align*}

\medskip

Setting $D_\varepsilon := [\varepsilon,+\infty)^d$, we have 
therefore $$\inf_{y \in C_M} \Gamma_g(y) = \inf_{y \in C_M\cap D_\varepsilon} \Gamma_g(y).$$
As the nonempty\footnote{It contains $\bar{y}$ by definition of $\varepsilon$.} set $C_M \cap D_\varepsilon$ is compact, there exists $y^* \in C_M \cap D_\varepsilon$ such that 
$ \Gamma_g(y) \geq  \Gamma_g(y^*)$ for every $y \in C_M \cap D_\varepsilon$. 
We deduce that $ \Gamma_g(y) \geq  \Gamma_g(y^*)$ for every 
$y \in \mathbb{(R_+^*)}^d$, proving the first assertion of the theorem. 

\medskip
The uniqueness of the minimizer is a consequence of the strict convexity of $\Gamma_g$.
\medskip

Now, as $y^*$ is an interior minimum of $\Gamma_g$, we have

$$\forall i \in \big\{1,\ldots,d\big\}, \quad g'(\mathcal{R}(y^*)) \partial_i \mathcal{R}(y^*) - \frac{b_i}{y_i^*} = 0, $$
or, equivalently,
$$\forall i \in \big\{1,\ldots,d\big\}, \quad y_i^* g'(\mathcal{R}(y^*)) \partial_i \mathcal{R}(y^*) = b_i.$$
Summing over $i\in \{1,\ldots,d\}$, Euler's homogeneous function theorem gives
$\mathcal{R}(y^*) g'\big(\mathcal{R}(y^*)\big) = 1$. Therefore, we get
$$\forall i \in \big\{1,\ldots,d\big\}, \quad
    y_i^* \partial_i \mathcal{R}(y^*) = b_i \mathcal{R}(y^*).$$
Setting $\theta^* = y^*/\sum_{i=1}^d y^*_i$ and using \textbf{(PH)}, we see that $\theta^*$ solves $\text{RB}_{b}$.
\end{proof}

After having proven the existence of a solution to the Risk Budgeting problem, let us deal with uniqueness. 
\begin{thm}
\label{thm_uniqueness}
Let $b \in \Delta^{>0}_d$ and let $\theta \in \Delta^{>0}_d$ be a solution of $\text{RB}_b$.
Let $g : \mathbb{R_+} \rightarrow \mathbb{R}$ be a continuously differentiable convex and increasing function.
Consider the map $\Gamma_g$ as defined in Theorem~\ref{thm_existence}, and let $y^*$ be the unique minimizer of $\Gamma_g$.\\
Then, we have $$\theta = \frac{y^*}{\sum_{i=1}^d y^*_i}\cdot$$
\end{thm}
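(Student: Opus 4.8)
The plan is to show that any solution $\theta$ of $\text{RB}_b$, after a suitable positive rescaling, is a critical point of $\Gamma_g$; since $\Gamma_g$ is strictly convex (as recalled in the proof of Theorem~\ref{thm_existence}) and differentiable on $(\mathbb{R}_+^*)^d$, its unique critical point is its unique minimizer $y^*$, and normalizing then yields $\theta = y^*/\sum_{i=1}^d y^*_i$.

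First I would record two homogeneity facts. From \textbf{(PH)} we have $\mathcal{R}(\lambda y) = \lambda \mathcal{R}(y)$ for all $\lambda > 0$ and $y \in (\mathbb{R}_+^*)^d$; differentiating this identity with respect to $y_i$ shows that each partial derivative $\partial_i \mathcal{R}$ is positively homogeneous of degree $0$, i.e. $\partial_i \mathcal{R}(\lambda y) = \partial_i \mathcal{R}(y)$. Next I would look for $\lambda > 0$ such that $\lambda\theta \in (\mathbb{R}_+^*)^d$ satisfies the interior first-order conditions of $\Gamma_g$, namely $g'(\mathcal{R}(\lambda\theta))\, \partial_i \mathcal{R}(\lambda\theta) = \frac{b_i}{\lambda\theta_i}$ for every $i$. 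Using the two homogeneity facts this becomes $g'(\lambda\mathcal{R}(\theta))\, \partial_i \mathcal{R}(\theta) = \frac{b_i}{\lambda\theta_i}$, and since $\theta$ solves $\text{RB}_b$ one may substitute $\partial_i \mathcal{R}(\theta) = \frac{b_i \mathcal{R}(\theta)}{\theta_i}$ and cancel the positive factor $b_i/\theta_i$, so the whole system collapses to the single scalar equation $x g'(x) = 1$ with $x = \lambda\mathcal{R}(\theta)$.

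It then remains to check that $x g'(x) = 1$ is solvable: the map $x \mapsto x g'(x)$ is continuous on $\mathbb{R}_+^*$, it tends to $0$ as $x \to 0^+$ (since $g'$ is continuous on $\mathbb{R}_+$, hence bounded near $0$), and it tends to $+\infty$ as $x \to +\infty$ (since $g$ is convex and strictly increasing there is $c$ with $g'(c) > 0$, and then $g'(x) \ge g'(c)$ for $x \ge c$), so by the intermediate value theorem there is $x^* > 0$ with $x^* g'(x^*) = 1$; setting $\lambda := x^*/\mathcal{R}(\theta) > 0$ (recall $\mathcal{R}(\theta) > 0$) gives the desired rescaling. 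Hence $\lambda\theta$ is a critical point of the strictly convex differentiable function $\Gamma_g$ on $(\mathbb{R}_+^*)^d$, so $\lambda\theta = y^*$, and dividing by $\sum_{i=1}^d \lambda\theta_i = \lambda$ yields $\theta = y^*/\sum_{i=1}^d y^*_i$. The argument is essentially bookkeeping once the degree-$0$ homogeneity of $\partial_i\mathcal{R}$ is noted; the only step needing a little care -- hence the main obstacle -- is producing the scaling factor $\lambda$, i.e. the solvability of $x g'(x) = 1$, which is exactly where the hypotheses that $g$ is $C^1$ on all of $\mathbb{R}_+$, convex, and strictly increasing get used.
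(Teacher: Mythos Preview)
Your proposal is correct and follows essentially the same approach as the paper: find a rescaling $\lambda>0$ so that $\lambda\theta$ satisfies the first-order conditions of $\Gamma_g$, then invoke strict convexity to identify $\lambda\theta$ with $y^*$. The only cosmetic difference is that the paper applies the intermediate value theorem directly to $\lambda\mapsto \mathcal{R}(\lambda\theta)\,g'(\mathcal{R}(\lambda\theta))$ on $[0,\infty)$, whereas you first substitute $x=\lambda\mathcal{R}(\theta)$ and work with $x\mapsto xg'(x)$; the arguments are otherwise identical.
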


\begin{proof}
The function $h: \lambda \in \mathbb R_+ \mapsto \mathcal{R}(\lambda\theta) g'(\mathcal{R}(\lambda\theta))$ is continuous because $\mathcal{R}$ and $g'$ are continuous. Since $h(0)=0$ and $\lim_{\lambda \to +\infty} h(\lambda) = +\infty$, there exists $\bar{\lambda}\in \mathbb R_+$ such that $h(\bar{\lambda})=1$.

\medskip

Defining $y := \bar{\lambda} \theta$, we obtain, for all $i \in \{1,\ldots,d\}$,
$$y_i \partial_i \mathcal{R}(y) g'\big(\mathcal{R}(y)\big) = \bar{\lambda} \theta_i \partial_i \mathcal{R}(\bar{\lambda} \theta) g'\big(\mathcal{R}(\bar{\lambda} \theta)\big) =
\frac{\bar{\lambda} \theta_i \partial_i \mathcal{R}(\bar{\lambda} \theta)}{ \mathcal{R}(\bar{\lambda}\theta)}= \frac{\bar{\lambda} \theta_i \partial_i \mathcal{R}( \theta)}{ \bar{\lambda}\mathcal{R}(\theta)}  = \frac{\theta_i \partial_i \mathcal{R}( \theta)}{ \mathcal{R}(\theta)} = b_i,$$
because $\theta$ is a solution of $\text{RB}_b$ and $\mathcal{R}$ (resp. $\partial_i \mathcal{R}$) is positively homogeneous of degree~$1$ (resp. of degree $0$).
In other words, this yields
$$\quad \partial_i \mathcal{R}(y) g'\big(\mathcal{R}(y)\big) - \frac{b_i}{y_i} = 0, \forall i \in \{1,\ldots,d\}, $$ 
and $y$ is a critical point of the convex function $\Gamma_g$.
\medskip
We conclude that $y^* = y = \bar{\lambda} \theta$. Since $\theta \in \Delta_d^{>0}$, we must have
$\theta = y^*/\sum_{i=1}^d y^*_i$.
\end{proof}

The above theorems prove the existence of a unique solution to the Risk Budgeting problem for any vector of positive budgets and provide a variational characterization (up to rescaling) of the vector of weights. The link between the Risk Budgeting problem and the unconstrained convex minimization problem we defined in Theorem~\ref{thm_existence} was already noticed by several authors, notably \cite{bruder2012managing,griveau2013fast,roncalli2013introduction,spinu2013algorithm}, when $g$ is the identity map (a sketch of the proof of Theorems~\ref{thm_existence} and~\ref{thm_uniqueness} has for instance been proposed in \cite{roncalli2013introduction}, Section 2.2.2.2). Nonetheless, to the best of our knowledge, no complete mathematical proof of the existence and uniqueness of Risk Budgeting portfolios has been stated in the literature until now. Moreover, the possibility of choosing functions $g$ beyond the identity function will prove to be useful for developing our general framework in the next sections.

\begin{rem}
The above theorems are also useful to shed light on the advantage of Risk Budgeting for building portfolios. Indeed, specifying risk budgets rather than weights allows somehow to reduce risk as it has been shown in \cite{roncalli2013introduction}. Indeed, for any $b \in \Delta^{>0}_d$, if $\theta \in \Delta^{>0}_d$ is the solution of $\text{RB}_b$, then $\mathcal{R}(\theta) \le \mathcal{R}(b).$\footnote{If $y^*$ is the minimizer of~$y \mapsto \mathcal{R}(y)\! -\! \sum_{i=1}^d b_i \log{y_i}$ on $\mathbb{(R_+^*)}^d$ and
 $\lambda := \sum_{i=1}^d y^*_i$, then we have seen that $\theta=y^*/\lambda$ and
\begin{eqnarray*}
\mathcal{R}(\theta) &=& \frac{1}{\lambda} \mathcal{R}(y^*) \le \frac{1}{\lambda} \bigg(\mathcal{R}(\lambda b)  - \sum_{i=1}^d b_i \log(\lambda b_i) + \sum_{i=1}^d b_i \log{y^*_i}\bigg)\\
&\le& \mathcal{R}(b) - \frac 1\lambda\sum_{i=1}^d b_i \log\bigg(\frac{b_i}{\theta_i}\bigg) \le \mathcal{R}(b),
\end{eqnarray*}
because relative entropy is non-negative.}
\end{rem}

\section{Risk Budgeting with Expected Shortfall}
\label{Stoch_gradient_descent_risk_budgeting}

\subsection{The importance of Expected Shortfall}
\label{importance_Expected_Shortfall}
In the first papers advocating for the Risk Budgeting approach, the chosen risk measure was volatility. Volatility indeed constitutes a reasonable choice of risk measure, especially when the probability distributions of asset returns do not exhibit asymmetry and / or heavy tails.

\medskip

Denoting by $\Sigma$ the covariance matrix of asset returns, the volatility of the portfolio defined by the vector of weights $y$ is obviously $\mathcal{R}(y) := \sqrt{y'\Sigma y}$. The latter quantity is easily computed and the Risk Budgeting problem can be efficiently solved using a gradient descent procedure,\footnote{When $d=2$, it is possible to derive a closed-form  solution of the Risk Budgeting problem for volatility. In the general case ($d>2$), analytical solutions only exist in very specific cases, notably if we assume that the correlations between asset returns are all equal to the same constant in $\left\{\frac{-1}{d-1},0,1\right\}$. If we consider the specific case of ERC (i.e. $b_i=\frac{1}{d}$) for volatility, then a closed-form solution is known when all correlations are equal and in that case the ERC portfolio has weights inversely proportional to volatilities. See Sections 2.2.2.1, 2.3.1 and 2.3.2 in~\cite{roncalli2013introduction} for an in-depth analysis of these specific cases.} to minimize over $\mathbb{(R_+^*)}^d$ the function  
\begin{equation*}
 \Gamma_{x \mapsto x^2} : y \mapsto \big(\mathcal{R}(y)\big)^2 - \sum_{i=1}^d b_i \log{y_i} = y'\Sigma y - \sum_{i=1}^d b_i \log{y_i}. 
\end{equation*} 

\medskip

However, it is well-known that asset and portfolio returns exhibit skewed and heavy-tailed distributions. 
And numerous studies show that excess returns reward investors for carrying the risk of sudden and significant losses \cite{lemperiere2017risk, rietz1988equity}.
Therefore, to more efficiently deal with such distributional features in portfolio management, it makes sense to use other risk measures.

\medskip

The most classical risk measure in finance beyond volatility is Value-at-Risk. The Value-at-Risk at level $\alpha \in (0,1)$ of a real-valued random variable $Z$ (regarded as a loss) is $$\text{VaR}_\alpha(Z) := \inf \{ z \in \mathbb R  \, | \, \mathbb P(Z \le z) \ge \alpha \}.$$ In the case of a continuous random variable with positive density, $\text{VaR}_\alpha(Z)$ is uniquely characterized by $\mathbb{P}\big(Z \leq \text{VaR}_\alpha(Z)\big) = \alpha$ or equivalently $\mathbb{P}\big(Z \geq \text{VaR}_\alpha(Z)\big) = 1 - \alpha$: it corresponds to a threshold of loss that is exceeded with probability $1-\alpha$. Value-at-Risk is widely used by practitioners and regulators but it suffers from the major drawback of not being sub-additive. From this perspective, a better risk measure is Expected Shortfall. The Expected Shortfall at level $\alpha \in (0,1)$ of a real-valued $L^1$ random variable $Z$ (regarded as a loss) is defined by $$\text{ES}_\alpha(Z) := \frac 1{1-\alpha} \int_{\alpha}^1 \,\text{VaR}_s(Z) \,ds.$$ For a continuous and $L^1$ real-valued random variable $Z$ with positive density, Expected Shortfall at level $\alpha \in (0,1)$ can be written as $\text{ES}_\alpha(Z) = \mathbb{E} \big[Z | Z \geq \text{VaR}_\alpha(Z)\big]$ and it corresponds therefore to the average loss beyond the Value-at-Risk threshold. Expected Shortfall is a coherent risk measure (see \cite{acerbi2002coherence}) and it is therefore an RB-compatible risk measure.

\medskip

Expected Shortfall has been considered in the recent Risk Budgeting literature. For instance,~\cite{lezmi2018portfolio} proposed to use Expected Shortfall to construct Risk Budgeting portfolios 
because the latter risk measure allows to focus on the left tail of P\&L distributions only, contrary to volatility.\footnote{Unlike volatility, Expected Shortfall depends on expected returns. Nonetheless, it is always possible to translate random variables by their expectations to capture skewness risk independently of expected returns (see Section \ref{Extending_framework}).}

\medskip

When the chosen risk measure is Expected Shortfall, there is no simple formula for $\mathcal{R}(y)$ in general. There exist nevertheless some cases in which Expected Shortfall is easily computed. For instance, when asset returns are distributed according to a mixture of two Gaussian distributions as in \cite{lezmi2018portfolio},
there exist semi-analytic expressions for Expected Shortfalls. Then, the above gradient procedure works fine because it can rely on semi-analytic formulas.

\medskip

In this section, we first show that some semi-analytic expressions for Expected Shortfall are available whenever the underlying asset returns are distributed according to a Student-t mixture. Therefore, we extend the scope of the probability distributions for which the above gradient descent procedure should be the reference method for solving Risk Budgeting problems. Note that we must choose a level $\alpha$ so that the Expected Shortfall (at that level) of all long-only portfolios is positive. Then, we propose a general method that does not rely on any parametric assumption for the joint law of the asset returns, but it requires to use a stochastic gradient descent rather than a simple gradient descent. 

\subsection{A parametric model with semi-analytic expressions}
\label{parametric_model_analytic}

There are many possible choices for modeling the joint distribution of asset returns. Because asset returns often exhibit skewed and heavy-tailed distributions, Student-t mixtures are natural candidates. Student-t distributions indeed generate heavy tails, and mixing them allows to model skewness when some of them are not centered. Now, we show that Expected Shortfall can be computed in a semi-analytic manner when the vector of asset returns $X$ is a mixture of~$N$~multivariate Student-t distributions. This means $X$ can be written as
$$X = 1_{C = 1} X_1 + \ldots + 1_{C = N} X_N, $$ 
where:
\begin{itemize}
    \item for all $i \in \{1,\ldots,N\}$, $X_i$ follows a $d$-dimensional Student-t distribution $t(\mu_i, \Lambda_i, \nu_i)$ with location parameter $\mu_i$, positive-definite scale matrix $\Lambda_i$, and $\nu_i>1$ degrees of freedom, i.e. $X_i$ has a density $$f(x|\mu_i, \Lambda_i, \nu_i) := \frac{\Gamma\left[(\nu_i+d)/2\right]}{\Gamma(\nu_i/2)\nu_i^{d/2}\pi^{d/2}\text{det}(\Lambda_i)^{1/2}}\left\{1+\frac{1}{\nu_i}({x}-{\mu_i})'{\Lambda_i}^{-1}({ x}-{\mu_i})\right\}^{-(\nu_i+d)/2};$$
    \item $C$ is a discrete random variable with values in $\{1, \ldots, N\}$ and $\Pb(C=i) = p_i$ for all $i\in \{1,\ldots,N\}$, $\sum_{i=1}^N p_i = 1$;
    \item $C, X_1, \ldots, X_N$ are mutually independent.
\end{itemize}

For $y \in \mathbb R^d$, the probability density function and the cumulative distribution function of the loss $-y'X$ are respectively  
$$ f_{-y'X}:z \mapsto \sum\limits_{i=1}^N  \frac{p_i}{\sqrt{y'\Lambda_iy}} f_{t_{\nu_i}} \left(\frac{z + y'\mu_i}{\sqrt{y'\Lambda_iy}}\right)$$
and
$$ F_{-y'X}: z \mapsto \sum\limits_{i=1}^N  \frac{p_i}{\sqrt{y'\Lambda_iy}} F_{t_{\nu_i}} \left(\frac{z + y'\mu_i}{\sqrt{y'\Lambda_iy}}\right),$$
where $f_{t_{\nu_i}}$ and $F_{t_{\nu_i}}$ denote respectively the density function and the cumulative distribution function of a standard Student-t distribution with $\nu_i$ degrees of freedom.

\medskip

Note that the latter distributions are continuous. Therefore, the Value-at-Risk at level $\alpha \in (0,1)$ associated with the loss $-y'X$ is characterized by the relationship
\begin{equation*}
     \sum\limits_{i=1}^N  p_i F_{t_{\nu_i}} \Bigg(\frac{\text{VaR}_\alpha(-y'X) + y'\mu_i}{\sqrt{y'\Lambda_iy}}\Bigg) = \alpha.
\end{equation*}

The associated Expected Shortfall at level $\alpha$ is then
\begin{align*}
    &\text{ES}_\alpha(-y'X)
    = \frac{1}{1-\alpha} \sum\limits_{i=1}^N  \frac{p_i}{\sqrt{y'\Lambda_iy}} \int_{\text{VaR}_\alpha(-y'X)}^{+\infty} z f_{t_{\nu_i}} \bigg(\frac{z + y'\mu_i}{\sqrt{y'\Lambda_iy}}\bigg) \, dz \nonumber\\
    &= \frac{1}{1-\alpha} \sum\limits_{i=1}^N  p_i \int_{\frac{\text{VaR}_\alpha(-y'X) + y'\mu_i}{\sqrt{y'\Lambda_iy}}}^{+\infty} (u{\sqrt{y'\Lambda_iy}} - y'\mu_i) f_{t_{\nu_i}} (u)\, du \nonumber\\
    &= \frac{1}{1-\alpha} \sum\limits_{i=1}^N  p_i \Bigg\{ {\sqrt{y'\Lambda_iy}}\int_{\frac{\text{VaR}_\alpha(-y'X) + y'\mu_i}{\sqrt{y'\Lambda_iy}}}^{+\infty} u f_{t_{\nu_i}} (u)\, du  - 
    y'\mu_i \int_{\frac{\text{VaR}_\alpha(-y'X) + y'\mu_i}{\sqrt{y'\Lambda_iy}}}^{+\infty} f_{t_{\nu_i}} (u)\, du \Bigg\}\nonumber\\
    &= \frac{1}{1-\alpha} \sum\limits_{i=1}^N  p_i \Bigg\{{\sqrt{y'\Lambda_iy}}\int_{\frac{\text{VaR}_\alpha(-y'X) + y'\mu_i}{\sqrt{y'\Lambda_iy}}}^{+\infty} u f_{t_{\nu_i}} (u) _,du  - 
    y'\mu_i \int_{-\infty}^{-\frac{\text{VaR}_\alpha(-y'X) + y'\mu_i}{\sqrt{y'\Lambda_iy}}} f_{t_{\nu_i}} (u)\, du \Bigg\}\nonumber\\
    &= \frac{1}{1-\alpha} \sum\limits_{i=1}^N  p_i \Bigg\{ {\sqrt{y'\Lambda_iy}} \bigg(\frac{\nu_i+\left(\frac{\text{VaR}_\alpha(-y'X) + y'\mu_i}{\sqrt{y'\Lambda_iy}}\right)^2}{\nu_i-1}\bigg) f_{t_{\nu_i}} \left(\frac{\text{VaR}_\alpha(-y'X) + y'\mu_i}{\sqrt{y'\Lambda_iy}}\right)  \nonumber\\
    &\ \ \ - y'\mu_i F_{t_{\nu_i}} \bigg(-\frac{\text{VaR}_\alpha(-y'X) + y'\mu_i}{\sqrt{y'\Lambda_iy}}\bigg) \Bigg\} \nonumber
\end{align*}
where we used the identity $\int_t^{+\infty} u f_{t_\nu}(u) du = (\nu +t^2) f_{t_\nu}(t) /(\nu-1)$, obtained by a direct calculation.

\medskip

Given that $\text{VaR}_\alpha(-y'X)$ can easily be computed with a root-solving algorithm,
the above expression can be regarded as semi-analytic. In particular, following Theorem~\ref{thm_existence}, a simple gradient descent procedure can be used to compute Risk Budgeting portfolios in this framework. 

\subsection{Towards a stochastic optimization problem}
\label{tsop}
Although some parametric models as in Section~\ref{parametric_model_analytic} could be used, they can fall short of being a good representation of asset returns. For that reason, more complex models might be preferred despite the lack of a semi-analytic expression for Expected Shortfall. For example, 
in order to capture joint tails, \cite{gava2021turning} proposed some Pareto distributions for the left tail of individual asset returns and a Vine copula to model the dependence structure. For such a setting, in order to use a gradient descent procedure, one needs to estimate the Expected Shortfall term at each step of the optimization algorithm. This typically requires a computer-intensive approach as Monte Carlo simulations are usually necessary to estimate Value-at-Risk and then Expected Shortfall.  

\medskip

An alternative route consists in estimating Expected Shortfall and computing the solution of the Risk Budgeting problem at the same time. This route is based on the variational characterization of Expected Shortfall known as the Rockafellar-Uryasev formula (see \cite{rockafellar2002conditional}):
\begin{equation*}
\text{ES}_\alpha(Z) = \inf_{\zeta \in \mathbb R} \Big( \zeta + \frac 1{1-\alpha}  \Eb\big[(Z-\zeta)_+\big] \Big),
\end{equation*}
for any real-valued random variable $Z\in L^1$ (regarded as a loss). Moreover, the infimum in the above formula is in fact a minimum and, when $Z$ has positive density, the minimizer is unique, given by $\text{VaR}_\alpha(Z)$.

\medskip 
Using Rockafellar-Uryasev formula, the function $\Gamma_{\text{Id}}$ in Theorem~\ref{thm_existence} is written
$$\Gamma_{\text{Id}} : y \mapsto \min_{\zeta \in \mathbb R} \mathbb E\Big[ \zeta + \frac 1{1-\alpha} (-y'X-\zeta)_+\Big] - \sum_{i=1}^d b_i \log{y_i}.
$$
Therefore, solving $\text{RB}_b$ boils down to solving the stochastic optimization problem
\begin{equation}
\label{stocRB}
    \min_{(y,\zeta) \in (\mathbb R_+^*)^d\times\mathbb R} \mathbb E\Big[ \zeta + \frac 1{1-\alpha} (-y'X-\zeta)_+ - \sum_{i=1}^d b_i \log{y_i}\Big].
\end{equation}
More precisely, once the solution $(y^*,\zeta^*)$ of the stochastic optimization problem has been found, typically using a stochastic gradient descent, we simply need to normalize $y^*$ and $\theta := y^*/\sum_{i=1}^d y^*_i$ solves $\text{RB}_b$.

\medskip

By using the above variational characterization, we have seen that solving the Risk Budgeting problem with Expected Shortfall is reduced to solving a stochastic optimization problem. 
It must be noted that this approach can be applied to a large range of asset return distributions since the only requirement is that of a finite mean. 
This approach is therefore, somehow, universal in terms of asset return distributions. 

\medskip
Is this approach specific to Expected Shortfall? We tackle this question in the next section and show that similar ideas can be used for a large set of risk measures. 

\section{Extension to other RB-compatible risk measures}
\label{Extending_framework}

\subsection{Introduction and preliminary remarks}

Consider an RB-compatible risk measure $\rho$. The ideas outlined in Section~\ref{tsop} formally apply when 
there exists a continuously differentiable, convex and increasing function $g : \mathbb{R_+} \rightarrow \mathbb{R}$ such that 
\begin{equation*}
g\big(\mathcal{R}(y)\big) = g\big(\rho(-y'X)\big) = \min_{\zeta \in \mathcal{Z}} \mathbb E\big[H(\zeta, -y'X) \big]
\end{equation*}
for some set $\mathcal{Z}$ and some function $H$.\footnote{Of course, we must choose a risk measure that is positive for all long-only portfolios, otherwise our Risk Budgeting problem does not make sense.} 
Indeed, in such a case, the $\text{RB}_b$ problem boils down to the stochastic optimization problem 
\begin{equation*}
  \min_{y \in (\mathbb{R}_+^*)^d, \zeta \in \mathcal Z} \mathbb E\Big[H(\zeta, -y'X) - \sum_{i=1}^d b_i \log{y_i}\Big].
\end{equation*}
Like in Section~\ref{tsop}, if $(y^*,\zeta^*)$ is a solution of the above stochastic optimization problem, then $\theta := y^*/\sum_{i=1}^d y^*_i$ solves $\text{RB}_b$.

\medskip

It is noteworthy that the theorems of Section \ref{RB_existence_unicity} involve a function $g$. In the case of Expected Shortfall, we used the identity function, i.e. $g(x) = x$, and that case deserves several remarks.

\medskip

Risk measures that can be characterized as minimizers have significantly attracted the attention of the academic literature (see \cite{gneiting2011making} and the concept of elicitability). In spite of their attractiveness for optimization problems, risk measures characterized by minima have been less studied. An important paper dealing with risk measures in the form of minima is \cite{rockafellar2013fundamental} in which the authors introduce a general framework involving the famous quadrangles made of error, deviation, regret and risk. In their setting, risk measures are minima and even minima of expected values in the case of expectation quadrangles. In their interesting paper \cite{embrechts2021bayes}, Embrechts and his coauthors introduced the notions of Bayes pair and Bayes risk measure which are related to our problem when $g = \text{Id}$: a pair of risk measures $(\eta, \rho)$ is called a Bayes pair if there exists a measurable function $G:\Rb^2\rightarrow \Rb$ such that
$$ \eta(Z)=\text{arg}\!\min_{\zeta \in \Rb} \Eb\big[ G(\zeta,Z) \big] \;\text{and}\;\rho(Z)=\min_{\zeta \in \Rb} \Eb\big[ G(\zeta ,Z) \big].$$ In addition, if $\eta$ satisfies that $\eta(Z+c) = \eta(Z) + c$ for any scalar~$c$, then $\rho$ is called a Bayes risk measure. In particular, they show that convex combinations of Expected Shortfall and expectation (called $\text{ES}/\Eb$ mixtures) constitute the unique class of Bayes risk measures that are also coherent risk measures.
Beside $\text{ES}/\Eb$ mixtures, there exist other RB-compatible Bayes risk measures. For instance, the mean absolute deviation around the median
$\text{MAD}(Z):=\min_{\zeta\in \Rb} \Eb[ | Z-\zeta|]$ is Bayes (see \cite{embrechts2021bayes}) and RB-compatible, but not translation invariant (and then not coherent).
More generally, for any $a,b \in \mathbb{R}^*_+$, the risk measure defined by
$$ \rho (Z)= \min_{\zeta \in \Rb} \Eb\Big[a(Z-\zeta)_++ b(Z-\zeta)_{-}\Big]$$
is RB-compatible (see Proposition~\ref{lemma_rm} below) and Bayes. Note that it is not translation invariant, so it is not coherent either.
An interesting and open problem would be to formally characterize RB-compatible Bayes risk measures.

\medskip

In fact, if $\rho$ is an RB-compatible risk measure such that $\mathcal{R}_{\rho, X}(y) =  \min_{\zeta \in \mathcal{Z}} \mathbb E\big[H(\zeta, -y'X) \big]$ for some map $H$, then, for $\beta  \in \mathbb R_+^* $ and $\delta \in \mathbb R$, $\tilde{\rho}$ defined by 
$ \tilde \rho(Z):= \beta \rho(Z) + \delta \Eb[Z]$ is an RB-compatible risk measure too, and 
$\mathcal{R}_{\tilde \rho, X}(y) =  \min_{\zeta \in \mathcal{Z}} \mathbb E\big[\beta H(\zeta, -y'X) - \delta y'X \big]$. In particular, our ideas apply to linear combinations of Expected Shortfall and expectation terms (with positive coefficient for the Expected Shortfall terms). Linear combinations of Expected Shortfall and expectation are in fact quite common in the literature. As we discussed, they appear in the recent literature on Bayes pairs. They also appear for instance in the context of risk measures derived from the notion of optimized certainty equivalent (a utility function-based decision theoretic criterion that was first introduced in \cite{ben1986optimized}): for a large class of utility functions $u$, \cite{ben2007old} introduced sub-additive risk measures of the form 
$$ \rho : Z \mapsto \inf_{\zeta\in \Rb} \Big(\zeta - \Eb\big[u(-Z + \zeta) \big]\Big). $$ 
In particular, if $u(\xi) = a \xi_+ - b \xi_- $ with $0 \le a < 1 < b$, we get 
$$\rho(Z) = \inf_{\zeta \in \mathbb R} \Big( \zeta - \Eb\big[a(-Z+\zeta)_+ - b (-Z+\zeta)_-\big] \Big) = a \mathbb E[Z] + (1-a) \text{ES}_{\frac{b-1}{b-a}}(Z),$$ which is indeed a linear 
combination (with positive coefficients) of an expectation and an Expected Shortfall. 

\mds

Another linear combination of Expected Shortfall and expectation appears when one wants to factor out expectation from Expected Shortfall (think of the initial motivations behind risk-based methods -- see Section \ref{Intro}). 
In that case, we obtain the positive risk measure given by $$\rho(Z) = \text{ES}_{\alpha}(Z) - \mathbb E[Z] = \min_{\zeta \in \mathbb R} \Eb\left[ \zeta - Z + \frac 1{1-\alpha}  (Z-\zeta)_+\right] = \min_{\zeta \in \mathbb R} \Eb\left[ \frac \alpha{1-\alpha} (Z-\zeta)_+ + (Z-\zeta)_-\right].$$ 

\medskip

The above discussion was bound to the specific case where $g = \text{Id}$. Freedom in the choice of $g$ is however important. As an example, if $g(x) = x^2$, $\mathcal{Z} = \mathbb{R}$ and $H(\zeta,Z)= (Z-\zeta)^2$, 
we indeed have, in the case where volatility is the risk measure, that $$g(\mathcal{R}(y)) = \min_{\zeta \in \mathcal{Z}} \mathbb E\big[H(\zeta, -y'X) \big].$$

In other words, the Risk Budgeting problem with volatility as the risk measure can be solved using our new approach.\footnote{Solving the problem by stochastic gradient descent is of course not the best numerical approach in this case.}

\medskip

The natural question is now to evaluate the range of RB-compatible risk measures that can be seen as minima of some expected criteria. As we will see, many RB-compatible risk measures can be written, for well chosen sets $\mathcal{Z}$ and functions $H$, as 
$\min_{\zeta \in \mathcal{Z}} \mathbb E\big[H(\zeta, -y'X)\big]$. Classical generalizations of Expected Shortfall include spectral risk measures: they have such a representation. Moreover, classical extensions of volatility include a large class of deviation measures based on $L^p$ norms, that even includes recent risk measures like variantiles (see \cite{wang2020risk}): we will see they have such a representation too.

\subsection{Spectral risk measures}
Spectral risk measures are well-adopted risk measures that allow to amplify (or reduce) the impact of greater losses through a distortion function, as defined in \cite{acerbi2002spectral}. Formally, they are defined as
\begin{equation*}
\rho_h(Z) := \int_0^1 \text{VaR}_s(Z) h(s) \, ds,
\end{equation*}
for functions $h:[0,1]\rightarrow \Rb_+$ that are non-decreasing, right-continuous and satisfy $\int_0^1 h(s) ds=1$ with $h(0) = 0$, whenever the function $s \mapsto \text{VaR}_s(Z) h(s)$ is integrable on $(0,1)$ (this is guaranteed if $Z \in L^1$ and $h$ is bounded for instance).

\medskip

Expected Shortfall is a particular spectral risk measure since
$$\text{ES}_\alpha(Z) = \frac 1{1-\alpha} \int_\alpha^1 \text{VaR}_s(Z)\, ds = \int_0^1 \text{VaR}_s(Z)   \frac{1}{1-\alpha} 1_{s \in (\alpha,1)} \, ds.$$
In other words, Expected Shortfall corresponds to setting $h(s)=1_{s\in (\alpha,1)}/(1-\alpha)$: it takes into account Value-at-Risks above a fixed threshold uniformly and fully dismisses the others. Spectral risk measures extend Expected Shortfall in that they can assign a variety of weights to the different loss quantiles.

\medskip

For an $L^1$ real-valued random variable $Z$, if $s \mapsto \text{VaR}_s(Z) h(s)$ is integrable on $(0,1)$, then 
\begin{align}
\rho_h(Z) &= \int_0^1 \text{VaR}_s(Z) h(s) \,ds 
= -\int_0^1 \frac{d\ }{ds}\big((1-s) \text{ES}_s(Z)\big)  h(s)\, ds \nonumber \\
&= \int_0^1 (1-s) \text{ES}_s(Z) \, dh(s) - \big[h(s)(1-s)\text{ES}_s(Z) \big]_0^1, \nonumber
\end{align}
where we use Riemann–Stieltjes integrals. As $\lim_{s \to 0^+} h(s)(1-s)\text{ES}_s(Z) = h(0) \Eb[Z] = 0$ and $\lim_{s \to 1^-} h(s)(1-s)\text{ES}_s(Z) = h(1-)  \lim_{s \to 1^-} \int_s^1 \text{VaR}_u(Z)\, du  = 0$, the bracket term vanishes and we get
$$\rho_h(Z) = \int_0^1 (1-s) \text{ES}_s(Z) \, dh(s).$$

In particular, we see that spectral risk measures are part of a large class of RB-compatible risk measures defined by $$\rho^\mu(Z) = \int_0^1 \text{ES}_s(Z) d\mu(s),$$
where $\mu$ is a Borel measure on $(0,1)$.\footnote{This representation has lead to new risk measures called second-order superquantiles. They correspond to $d\mu(s)=1_{s \in (\alpha,1)} \frac 1{1-\alpha} ds$ -- see \cite{rockafellar2018superquantile}.}

\medskip

Interestingly, we have
\begin{align*}
\rho^\mu(Z) &= \int_0^1 \min_{\zeta \in \mathbb R} \Big( \zeta + \frac 1{1-s} \mathbb E\big[(Z-\zeta)_+\big]\Big) \, d\mu(s)\\
&= \min_{\zeta(\cdot) \in \mathcal Z} \int_0^1 \Big( \zeta(s) + \frac 1{1-s} \mathbb E\big[(Z-\zeta(s))_+\big]\Big) \, d\mu(s),
\end{align*}
where $\mathcal Z$ is the set of measurable functions on $(0,1)$.

\medskip

When $\mu(A) = \int_{A\cap(0,1)} (1-s) dh(s)$ for any Borel set $A$, we get
\begin{align}
\rho_h(Z) &= \min_{\zeta(\cdot) \in \mathcal Z} \mathbb E\bigg[\int_0^1 \Big((1-s)\zeta(s) + \big(Z-\zeta(s)\big)_+\Big) \, dh(s)\bigg]. \label{express_spectral}
\end{align}

\medskip
\begin{rem}
As in the case of Expected Shortfall, it is possible to factor out expected returns from spectral risk measures. Noticing that $\int_0^1 (1-s) dh(s) = 1$, we obtain the positive risk measure
\begin{align*}
\rho_h(Z) - \mathbb E[Z] &= \min_{\zeta(\cdot) \in \mathcal Z} \mathbb E\left[\int_0^1 \Big( (1-s)\big(\zeta(s) - Z\big) + \big(Z-\zeta(s)\big)_+\Big) \, dh(s)\right]   \\
&=\min_{\zeta(\cdot) \in \mathcal Z} \mathbb E\left[\int_0^1 \Big( s \big(Z-\zeta(s)\big)_+ + (1-s) \big(Z-\zeta(s)\big)_-\Big) \, dh(s)\right].
\end{align*}
\end{rem}

In the definition of Bayes risk measure, the set $\Zc$ is the real line. Here, we noticed that solving Risk Budgeting problems for spectral risk measures boils down to solving (stochastic) optimization problems, but in infinite dimension because $\Zc$ is the set of measurable functions. Thus, it is tempting to say that a spectral risk measure is ``weakly'' Bayes. Note that the only coherent Bayes risk measures are $\text{ES}/\Eb$ mixtures whereas any spectral risk measure is
coherent.

\medskip

In practice, we need to discretize the integrals to get a finite-dimensional stochastic optimization problem. These approximate solutions correspond to risk measures that are linear combinations (with positive coefficients) of a finite number of Expected Shortfall terms for different risk levels.

\subsection{Deviation measures}
Deviation measures are ubiquitous in statistics and applied mathematics. A risk measure $\rho$ is a deviation measure if it satisfies \textbf{(PH)},~\textbf{(SA)}, $\rho(Z+c)=\rho(Z)$ for any random variable $Z$ such that $\rho(Z)$ is defined and any constant $c$, and $\rho(Z)>0$ when $Z$ is not a constant. In other words, a deviation measure is RB-compatible, positive, and left unchanged when a riskless asset is added to any portfolio. See \cite{rockafellar2006generalized,rockafellar2008risk,rockafellar2013fundamental} for a discussion on deviation measures in risk management particularly. 
Standard deviation is surely the most popular member of this class, and it is commonly used in finance. 
Indeed, being the square root of a quadratic form, standard deviation is appealing for introducing the risk dimension in portfolio optimization problems.
It is however symmetrical because the volatility of $X$ is the same as the volatility of $-X$, while there is a great benefit in considering gains and losses asymmetrically in finance. 
In what follows, we propose a large class of RB-compatible risk measures that contains both symmetrical and asymmetrical deviation measures and for which our ideas of stochastic optimization (see Section~\ref{tsop}) apply.

\medskip

\begin{prop}
\label{lemma_rm}
Let $a,b \in \mathbb{R}^*_+$ and let the function $\psi_{a,b}:\Rb\rightarrow \Rb_+$ be defined by 
$$ \psi_{a,b}: z \longmapsto a z_+ + b z_-.$$
Set $p \in [1, +\infty)$. Let $F$ be a finite-dimensional subspace of $L^p(\Omega,\mathbb{R})$ and $\rho : L^p(\Omega,\mathbb{R}) \longrightarrow \mathbb{R}$ be defined by
$$ \rho : Z \longmapsto \inf_{f \in F} \Eb\big[{\psi_{a,b}(Z-f)^p}\big]^{\frac{1}{p}}.$$
Then, $\rho$ is an RB-compatible risk measure and the infimum in the definition of $\rho$ is in fact a minimum. 
\end{prop}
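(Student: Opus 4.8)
The plan is to first analyse the one-dimensional building block $\psi_{a,b}$, then transfer its properties to the functional $N:Z\mapsto\Eb[\psi_{a,b}(Z)^p]^{1/p}$, and finally treat the infimal projection onto $F$.

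First I would record the elementary facts about $\psi_{a,b}$. Since $a,b>0$ one checks that $\psi_{a,b}(z)=\max(az,-bz)$, so $\psi_{a,b}$ is convex as a maximum of two linear functions; it is positively homogeneous ($\psi_{a,b}(\lambda z)=\lambda\psi_{a,b}(z)$ for $\lambda\ge 0$), nonnegative, and vanishes only at $z=0$. Being convex and positively homogeneous, $\psi_{a,b}$ is subadditive: $\psi_{a,b}(z_1+z_2)\le\psi_{a,b}(z_1)+\psi_{a,b}(z_2)$. Moreover $0\le\psi_{a,b}(z)\le\max(a,b)|z|$, so $\psi_{a,b}(Z)\in L^p(\Omega,\Rb)$ whenever $Z\in L^p(\Omega,\Rb)$; hence $N(Z):=\Eb[\psi_{a,b}(Z)^p]^{1/p}=\|\psi_{a,b}(Z)\|_{L^p}$ is a well-defined nonnegative real number. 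From pointwise positive homogeneity and subadditivity of $\psi_{a,b}$, together with monotonicity of $\|\cdot\|_{L^p}$ on nonnegative functions and the Minkowski inequality, $N$ is positively homogeneous and subadditive, hence convex; and $N(Z)=0$ if and only if $Z=0$ almost surely.

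Next I would show that $\rho(Z)=\inf_{f\in F}N(Z-f)$ inherits these properties. Since $N\ge 0$ and $0\in F$, we have $0\le\rho(Z)\le N(Z)<\infty$, so $\rho$ is real-valued. For \textbf{(PH)}: for $\lambda>0$ the map $f\mapsto\lambda f$ is a bijection of the subspace $F$, so $\rho(\lambda Z)=\inf_{f\in F}N(\lambda Z-\lambda f)=\lambda\inf_{f\in F}N(Z-f)=\lambda\rho(Z)$, and $\rho(0)=0$ since $0\in F$. For \textbf{(SA)}: for any $f_1,f_2\in F$ one has $f_1+f_2\in F$, so $\rho(Z_1+Z_2)\le N((Z_1-f_1)+(Z_2-f_2))\le N(Z_1-f_1)+N(Z_2-f_2)$; taking the infimum over $f_1$ and then over $f_2$ yields $\rho(Z_1+Z_2)\le\rho(Z_1)+\rho(Z_2)$. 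Thus $\rho$ is an RB-compatible risk measure (and convex, as noted after the definition).

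Finally, for attainment, fix $Z\in L^p(\Omega,\Rb)$ and consider $\Phi:f\in F\mapsto N(Z-f)$. A reverse-triangle estimate gives $|N(u)-N(v)|\le\max\big(N(u-v),N(v-u)\big)\le\max(a,b)\|u-v\|_{L^p}$, so $\Phi$ is continuous on $F$ (alternatively, $\Phi$ is convex and finite, hence continuous on the finite-dimensional space $F$). The decisive point is coercivity, and this is where finite-dimensionality of $F$ is essential: the set $S=\{f\in F:\|f\|_{L^p}=1\}$ is compact, while $N$ is continuous and strictly positive on $S$ (because $N$ vanishes only at $0$), so $c:=\min_{f\in S}N(f)>0$; by positive homogeneity, $N(g)\ge c\|g\|_{L^p}$ for every $g\in F$. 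Since $-f$ ranges over $F$ as $f$ does, the same bound gives $N(-f)\ge c\|f\|_{L^p}$, and from subadditivity $N(-f)\le N(Z-f)+N(-Z)$, whence $\Phi(f)\ge c\|f\|_{L^p}-N(Z)\to+\infty$ as $\|f\|_{L^p}\to+\infty$. Therefore the sublevel set $\{f\in F:\Phi(f)\le\Phi(0)=N(Z)\}$ is closed, bounded, and nonempty, hence compact; by Weierstrass, $\Phi$ attains its minimum there, which is also its infimum over all of $F$, so the infimum in the definition of $\rho$ is a minimum. I expect the coercivity step — establishing $N(g)\ge c\|g\|_{L^p}$ on $F$ through compactness of the $L^p$-unit sphere of $F$ — to be the only non-routine ingredient.
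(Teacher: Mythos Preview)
Your proof is correct and follows the same overall strategy as the paper: positive homogeneity via the scaling bijection $f\mapsto f/\lambda$ on $F$, sub-additivity via the pointwise sub-additivity of $\psi_{a,b}$ combined with Minkowski's inequality, and attainment through a compactness argument exploiting the finite dimensionality of $F$.

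The one place where the two presentations differ slightly is the attainment step. The paper works directly with a minimizing sequence $(f_n)$ and bounds it using the explicit pointwise inequality $\min(a,b)\,|z|\le\psi_{a,b}(z)$, which immediately yields $\|f_n\|_{L^p}\le \|Z\|_{L^p}+\tfrac{1}{\min(a,b)}\big(\rho(Z)+1\big)$; then it extracts a convergent subsequence and passes to the limit. You instead establish coercivity abstractly, by observing that $N$ is strictly positive on the compact $L^p$-unit sphere of $F$ and hence $N(g)\ge c\|g\|_{L^p}$ on $F$, then apply Weierstrass. Both arguments are standard and equivalent; the paper's gives an explicit constant ($c=\min(a,b)$), while yours would generalize unchanged to any positive-definite sublinear $N$. (A tiny slip: in your final coercivity chain the constant should be $N(-Z)$ rather than $N(Z)$, but this does not affect the conclusion.)
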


\begin{proof} 
For $\lambda>0$, we have $$\rho(\lambda Z) = \inf_{f \in F} \Eb\big[{\psi_{a,b}(\lambda Z-f)^p}\big]^{\frac{1}{p}}=\inf_{f \in F} \Eb\left[\lambda^p{\psi_{a,b}\left( Z-\frac{f}{\lambda}\right)^p}\right]^{\frac{1}{p}}$$$$= \lambda \inf_{f \in F} \Eb\left[{\psi_{a,b}\left( Z-\frac{f}{\lambda}\right)^p}\right]^{\frac{1}{p}} = \lambda\inf_{f \in F} \Eb\big[{\psi_{a,b}( Z-f)^p}\big]^{\frac{1}{p}} = \lambda \rho(Z).$$
Since $\rho(0) = \inf_{f \in F} \Eb[{\psi_{a,b}(-f)^p}]^{\frac{1}{p}}=0$, $\rho$ is positively homogeneous.

\medskip

Coming to sub-additivity, it is clear that $\psi_{a,b}$ is sub-additive. Thus, for all $Z_1,Z_2  \in  L^p(\Omega,\mathbb{R})$, we have

\begin{align*}
\rho(Z_1+Z_2) = \inf_{f \in F} \Eb\big[{\psi_{a,b}(Z_1+Z_2-f)^p}\big]^{\frac{1}{p}} &= \inf_{f_1,f_2 \in F} \Eb\big[{\psi_{a,b}(Z_1+Z_2-f_1-f_2)^p}\big]^{\frac{1}{p}} \\
&\leq \inf_{f_1,f_2 \in F} \Eb\big[{(\psi_{a,b}(Z_1-f_1)+\psi_{a,b}(Z_2-f_2))^p}\big]^{\frac{1}{p}} \\
& \leq \inf_{f_1,f_2 \in F} \Eb\big[{(\psi_{a,b}(Z_1-f_1))^p}\big]^{\frac{1}{p}} + \Eb\big[{(\psi_{a,b}(Z_2-f_2))^p}\big]^{\frac{1}{p}}\\ 
& \leq \rho(Z_1)+\rho(Z_2),
\end{align*}
where we used the triangular inequality for the $L^p$ norm.
\medskip

Let us now consider a sequence $(f_n)_n$ of maps in $F$ such that $\Eb\big[{\psi_{a,b}( Z-f_n)^p}\big]^{\frac{1}{p}} \le \rho(Z) + \frac 1{n+1}$. We have
$$ \Eb[f_n^p]^{\frac{1}{p}} \le  \Eb[Z^p]^{\frac{1}{p}} +  \Eb[|Z-f_n|^p]^{\frac{1}{p}} \le \Eb[Z^p]^{\frac{1}{p}}+ \frac{1}{\min(a,b)} \Eb\big[\psi_{a,b}(Z-f_n)^p\big]^{\frac{1}{p}} $$$$\le \Eb[Z^p]^{\frac{1}{p}} + \frac{1}{\min(a,b)}\big(\rho(Z) +1\big)$$ and therefore $(f_n)_n$ is bounded in $F$. Up to a subsequence, it converges therefore, for the $L^p$ norm, towards a random variable $f \in F$ and we have

\begin{align*}
\Eb\big[\psi_{a,b}(Z-f)^p\big]^{\frac{1}{p}} &\le \Eb\Big[\big(\psi_{a,b}(Z-f_n) + \max(a,b) |f_n-f|\big)^p\Big]^{\frac{1}{p}} \\ 
&\le \Eb\big[\psi_{a,b}(Z-f_n)^p\big]^{\frac{1}{p}} + 
\max(a,b)  \Eb\big[|f_n-f|^p]^{\frac{1}{p}}\big] \\
&\le \rho(Z) + \frac 1{n+1} + \max(a,b)  \Eb[|f_n-f|^p]^{\frac{1}{p}}.
\end{align*}
Sending $n$ to $+\infty$, we see that $f$ is a minimizer and that the infimum in the definition of $\rho$ is indeed a minimum.
\end{proof}

Since $ \rho(Z+c)=\rho(Z)$ for all $Z \in L^p(\Omega,\mathbb{R})$ and any constant $c$, the latter risk measures $\rho$ are called deviation measures -- see~\cite{rockafellar2002deviation}. They are compatible with our framework for $g(x) = x^p$. Interestingly, the above family contains some familiar risk measures for particular choices of $F$, $a$, $b$ and $p$:

\medskip

(a) when $a=b$, we get symmetrical measures as
\begin{itemize}
    \item standard deviation when $F = \text{span}(1)$, $a=b=1$ and $p=2$;
    \item mean absolute deviation around the median (MAD), i.e. $\Eb\big[ |Z - \text{median}(Z)|\big]$, when $F = \text{span}(1)$, $a=b=1$ and $p=1$ (the minimum is reached for the median of the portfolio losses).
\end{itemize}

(b) When $a \neq b$, we retrieve some asymmetrical measures, for instance
\begin{itemize}
    \item Expected Shortfall at level $\alpha$ minus expectation, i.e. $\text{ES}_\alpha(Z) - \mathbb{E}[Z]$, for $\alpha \in (0,1)$, when $F = \text{span}(1)$, $a=\alpha/(1-\alpha)$, $b=1$ and $p=1$;
    \item the square root of the variantile at level $\alpha$ (see \cite{wang2020risk}) when $F = \text{span}(1)$, $a=\sqrt{\alpha}$, $b=\sqrt{1-\alpha}$ and $p=2$ (the minimum is reached for the expectile at level $\alpha$).
\end{itemize}

The above examples show the relevance of this family of risk measures, which are particular cases of deviation measures. 
When the goal is to focus on heavy tail risks, it should be relevant to impose $p>2$.
Extensions beyond the space $F$ of constant random variables can also be considered to focus on residual risk (when $F$ is spanned by factors, in the same spirit as~\cite{rockafellar2008risk}).

\section{Numerical results}
\label{Numerical_results}
\subsection{Convergence of the stochastic gradient descent algorithm}
In this section, we want to illustrate our results and ideas in a simple case. We consider a set of~4 assets and construct an ERC portfolio\footnote{This corresponds to setting risk budgets to $b = (0.25,0.25,0.25,0.25)$.} for Expected Shortfall at a confidence level $\alpha= 95\%$. Our goal is to compare the performances of an SGD algorithm with that of a standard procedure.\footnote{Although we call our method an SGD method, we use a mini-batch implementation because of its computational advantage compared to vanilla stochastic gradient descent where one considers a single data point at each iteration.}

\medskip
For what follows, we assume that the joint distribution of our asset returns is given by a mixture of two multivariate Student-t distributions. More precisely, we assume that $X$ has the following density with respect to the Lebesgue measure:
\begin{equation*}
    f_X(x) := p f(x|\mu_1, \Lambda_1, \nu_1) + (1-p) f(x|\mu_2, \Lambda_2, \nu_2).
\label{t_density}
\end{equation*}
To work with a realistic model, it has been calibrated using daily returns of Apple Inc. (AAPL), JPMorgan Chase \& Co. (JPM), Pfizer Inc. (PFE) and Exxon Mobil Corporation (XOM) over the period August 2008--April 2022. 
Our model parameters are then estimated using the expectation-maximization algorithm and rounded for the purpose of illustration.
Fixing the degrees of freedom at $\nu_1=4.0$ and $\nu_2=2.5$ a priori, we obtained that the weight is $p=0.7$, the location vectors are $\mu_1 = ( 0.001,  0.001,  0.001,  0.003)'$ and $\mu_2 = (-0.001, -0.002, -0.001, -0.002)'$, and the scale matrices are
\begin{center}
$\Lambda_1\! =\! \begin{bmatrix}
0.00010 &  0.00005 &  0.00002 &  0.00003 \\
0.00005 &  0.00010 &  0.00002 &  0.00002 \\
0.00002 &  0.00002 &  0.00010 &  0.00002 \\
0.00003 &  0.00002 &  0.00002 &  0.00010 \\
\end{bmatrix}$ and 
$\Lambda_2\! =\! \begin{bmatrix}
0.00040 &  0.00010 &  0.00010 &  0.00020 \\
0.00010 &  0.00010 &  0.00008 &  0.00009 \\
0.00010 &  0.00008 &  0.00010 &  0.00007 \\
0.00020 &  0.00009 &  0.00007 &  0.00020 \\
\end{bmatrix}$.
\end{center}

\medskip

Since Expected Shortfall has a semi-analytic form in our multivariate Student-t mixture model (Section~\ref{parametric_model_analytic}), we can easily compute the Risk Budgeting portfolio using the L-BFGS-B algorithm.\footnote{We stop the algorithm when the infinity norm of a projected gradient vector is less than $10^{-6}$.} The resulting  portfolio~$\theta$ is given in Table~\ref{tab-optimal-portfolio}. We confirm that $\theta$ solves $\text{RB}_{b}$ by noticing that the risk contributions $\theta_i \partial_i \mathcal{R}(\theta)$ are the same for all assets. This portfolio constitutes a reliable reference to evaluate the convergence and the accuracy of alternative (stochastic) methods. It is referred to as {\it the reference portfolio} in what follows.

\medskip

\begin{center}
\begin{tabular}{ccc}
\toprule
{Asset} &   $\theta_i$ &  $\theta_i \partial_i \mathcal{R}(\theta)$\\
\midrule
1 &  0.17958 &     0.00806 \\
2 &  0.28127 &     0.00806 \\
3 &  0.30483 &     0.00806 \\
4 &  0.23432 &     0.00806 \\
\bottomrule
\end{tabular}
\captionof{table}{Reference portfolio weights and risk contributions.}
\label{tab-optimal-portfolio}
\end{center}

Let us now come to the use of SGD methods to compute Risk Budgeting portfolios. In order to solve the stochastic optimization problem presented in Section~\ref{tsop} using SGD, we require sample points for asset returns. In what follows, we draw $10^6$ sample points with the multivariate Student-t mixture distribution of the random variable $X$. We use a mini-batch size of 128 and 10 epochs. Figure~\ref{sgd_conv} illustrates the dynamics of the $(y,\zeta)$ pair in~(\ref{stocRB}) and that of the associated $\theta^{SGD}$ through the iterations.

\medskip

\begin{figure}[!ht]
\centering
     \includegraphics[width=1.0\textwidth]{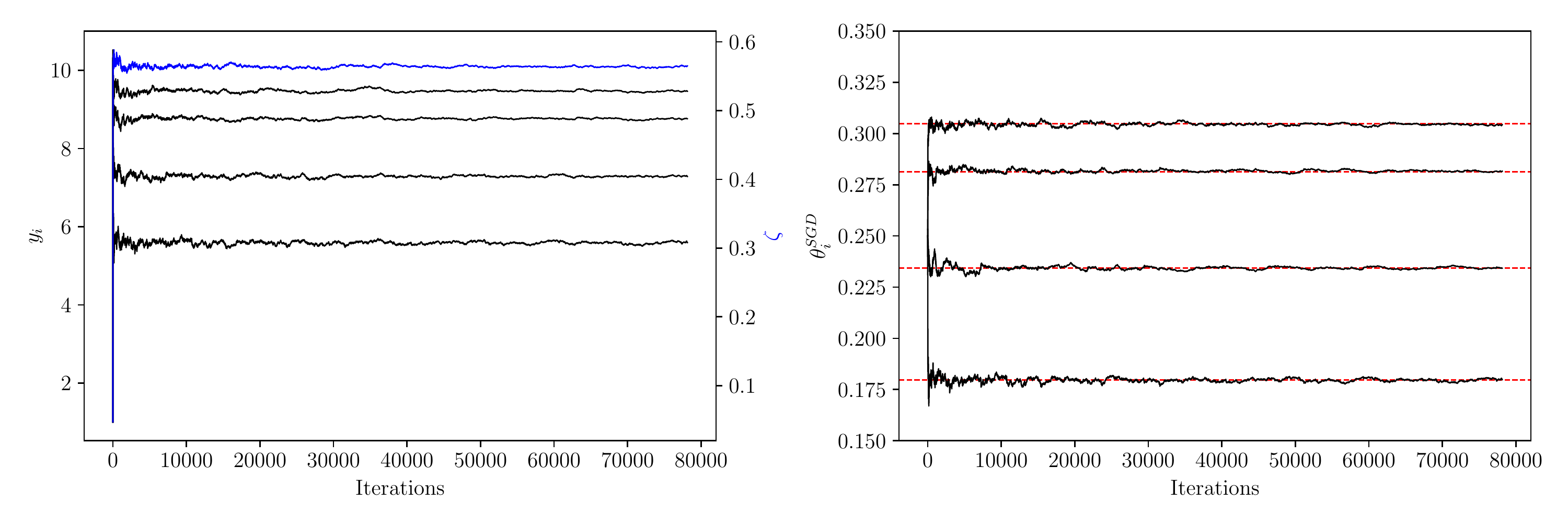}
      \caption{Left: evolution of the components of $y$ (black / left axis) and $\zeta$ (blue / right axis). Right: evolution of $\theta^{SGD}$ -- dashed lines are the asset weights of the reference portfolio.}
       \label{sgd_conv}
\end{figure}
\vspace{3mm}

The final estimator $y^{*^{SGD}}$ is computed using a standard Polyak-Ruppert averaging that relies on the last 20\% of all iterations. We always use this variance reduction technique for the SGD method throughout the paper. The resulting portfolio $\theta^{*^{SGD}}$ and its deviation from the reference portfolio are given in Table~\ref{tab-optimal-sgd_ns}. We clearly see that $\theta^{*^{SGD}}$ is very close to the reference portfolio.
\medskip

\begin{center}
\begin{tabular}{ccc}
\toprule
Asset  &  $\theta^{*^{SGD}}_i$ &      $|\theta_i-\theta^{*^{SGD}}_i|$ \\
\midrule
 1 &  0.17954  &     0.00005  \\
 2 &  0.28165 & 0.00038  \\
 3 &  0.30449  &  0.00034  \\
 4 &  0.23432  &    0.00001  \\
\bottomrule
\end{tabular}
\captionof{table}{Risk Budgeting portfolio weights obtained by using the SGD method with a Polyak-Ruppert averaging.}
\label{tab-optimal-sgd_ns}
\end{center}

\subsection{Speed and accuracy of various methods for different portfolio sizes}
\label{comparison}
Semi-analytic formulas for Expected Shortfall are only available in very specific cases. In general, the approximated computation of an Expected Shortfall without using the Rockafellar-Uryasev formula requires a sample of returns $\mathcal{X} = \{x_1, \ldots, x_{n}\}$. Such returns are observed historically or may be simulated based on a model. Then, invoke the usual empirical estimator
$$ \widehat{\text{ES}}_\alpha(y, \mathcal{X})= \frac{\sum^{n}_{i=1}(-y'x_i) 1_{\{-y'x_i \ge \hat q_\alpha\}}}{\sum^{n}_{i=1}1_{\{-y'x_i \ge \hat q_\alpha\}}},$$  where $\hat q_\alpha$ is the empirical Value-at-Risk of level $\alpha$, i.e. it is the empirical $\alpha-$quantile\footnote{There exist several ways of computing empirical quantiles. We shall use the default method of Python NumPy package which corresponds to the method 7 in \cite{hyndman1996sample}.} associated with the set of portfolio losses $\{-y'x_1, \ldots, -y'x_{n}\}$.

\medskip
In this case, the computation of the Risk Budgeting portfolio corresponds to solving the optimization problem given by
\begin{equation}
\label{empRB}
    \min_{y \in (\mathbb R_+^*)^d}  \big( \widehat{\text{ES}}_\alpha(y, \mathcal{X}) - \sum_{i=1}^d b_i \log{y_i} \big),
\end{equation}
and normalizing the minimizer.

\medskip
To solve Problem~\eqref{empRB}, one can use a standard gradient descent approach where, at each iteration, the gradient is approximated using finite differences, i.e. $ \partial_{y_i} \text{ES}_\alpha(-y'X)$ is approximated by
\begin{equation*}
\frac{\widehat{\text{ES}}_\alpha(y+he_i, \mathcal{X}) - \widehat{\text{ES}}_\alpha(y, \mathcal{X})}{h},
\end{equation*}
where $e_i$ denotes the $i^{\text{th}}$ vector of the canonical basis of $\mathbb R^d$ and $h$ is a small tuning parameter. Hereafter, we set $h=10^{-4}$.

\medskip

Two methods can be derived from the above remarks to compute Risk Budgeting portfolios for Expected Shortfall.
The first one corresponds to using the same sample of asset returns throughout the optimization process to estimate these risk measures. It is the only option (besides the SGD method) when working with historical samples in a model-free setting. This method will be called the \textit{one-sample benchmark gradient descent} (OSBGD) approach.  Such ideas can also be used in a model-based setting where one can run the algorithm on a simulated sample. This alternative way of working requires a model for $X$ and is based on simulating a new sample of asset returns at each iteration to estimate the new Expected Shortfall. We call the latter method the \textit{multi-sample benchmark gradient descent} (MSBGD) approach. The advantage of this second method is that it allows to manage smaller sample sizes and thus faster calculations at each iteration, while allowing for comparable -- or in fact higher -- accuracy in the long run. 

\medskip

Our implementation of the OSBGD method is a gradient descent procedure which is based on the Barzilai-Borwein methodology: at each step, the gradient is calculated using finite differences over a fixed sample and the step size is determined by a cheap approximation of the Hessian (see~\cite{barzilai1988two}). The stopping rule is based on the difference between two consecutive values of the objective function in Problem~\eqref{empRB} and the algorithm is stopped if that difference drops below~$10^{-6}$. The MSBGD method is very similar to the above procedure, but the gradient is calculated using a new sample drawn from the chosen model of asset returns at each iteration. The algorithm is stopped after a fixed number of iterations rather than using a stopping rule because of the stochasticity of gradient approximations. The final estimator is computed by averaging the last iterations.  

\medskip
This section provides results about accuracy and speed of the SGD, OSBGD and MSBGD methods in model-free and model-based settings. Our accuracy measure is the Manhattan / $\| \cdot \|_1 $ distance between the reference portfolio and the portfolio obtained by the optimization method.

\medskip
We are interested in building ERC portfolios of $d \in \{10, 20, 50, 100, 200, 350\}$ assets for Expected Shortfall at the confidence level $\alpha= 95\%$. In practice, as we are never in possession of neither the true distribution of asset returns nor the reference portfolio, we cannot measure the accuracy of the portfolio computed by the presented methods. To conduct our empirical analysis, we adopt a simulation-based approach. We define a data generating process (DGP) and assume that it reflects the true distribution of asset returns. We call it $\text{DGP}_{\text{true}}$. We then draw $n=3500$ data points from $\text{DGP}_{\text{true}}$ to generate a \textit{synthetic} historical sample $\mathcal{X}_{hist}$.\footnote{The sample size $n$ is chosen so as to represent the typical size of historical samples in the equity world where we often deal with a maximum of 10--15 years of daily return data.} The use of $\mathcal{X}_{hist}$ to compute Risk Budgeting portfolios is twofold: we can follow the model-free approach, where we can run the SGD and OSBGD methods using $\mathcal{X}_{hist}$, or, alternatively, the model-based approach, where we can fit a model on $\mathcal{X}_{hist}$ and proceed with simulated samples $\mathcal{X}_{sim}$ drawn from the estimated model. The latter choice allows to use all three methods (SGD, OSBGD and MSBGD). 
A detailed description of all these procedures is given in Appendix~\ref{appendixA}.

\medskip
Start with the model-free approach and compute Risk Budgeting portfolios using the SGD and OSBGD methods using $\mathcal{X}_{hist}$. For the SGD method, we use a mini-batch size of 128 and stop it after 100 epochs. Table~\ref{practical_both} documents the accuracy and computation time of both methods for different portfolio sizes.

\medskip

Table~\ref{practical_both} shows that one can get reasonably close to the \textit{true} Risk Budgeting portfolio up to a certain level with a limited amount of historical data. We observe that both methods yield very similar results in
terms of accuracy. It is intuitive that both methods produce similar results after an almost complete process of the information contained in the same inputs. In terms of computation time, the OSBGD method is efficient especially when constructing portfolios with a small number of assets. The advantage of OSBGD over SGD however disappears as the portfolio dimension $d$ grows. 

\medskip

\begin{center}
\begin{tabular}{c|cc|cc}
\toprule
{} & \multicolumn{2}{c|}{Accuracy} & \multicolumn{2}{c}{Time} \\
$d$ &          SGD &        OSBGD &          SGD &        OSBGD \\
\midrule
10  &  5.46 (1.63) &  5.47 (1.65) &  1.08 (0.01) &  0.05 (0.02) \\
20  &  6.63 (1.82) &  6.63 (1.78) &  1.18 (0.01) &  0.14 (0.09) \\
50  &  7.26 (1.64) &  7.28 (1.68) &  1.32 (0.01) &  0.34 (0.12) \\
100 &  7.67 (1.06) &  7.69 (1.05) &  1.52 (0.01) &  0.76 (0.30) \\
200 &  7.60 (1.42) &  7.60 (1.41) &  1.91 (0.02) &  1.32 (0.43) \\
350 &  7.83 (1.73) &  7.73 (1.62) &  2.52 (0.01) &  2.53 (1.09) \\
\bottomrule
\end{tabular}
 \captionof{table}{Accuracy of the Risk Budgeting portfolios obtained by the SGD and OSBGD methods for different numbers of assets under historical samples and computation time of algorithms (in seconds). The accuracy measure corresponds to $100 \lVert \theta - \theta^{method} \rVert_1$. Figures correspond to means and standard deviations (in parentheses) computed by repeating the process $m=50$ times with $\mathcal{X}_{hist}$ drawn from $m$ different $\text{DGP}_{\text{true}}$ for each $d$.}
\label{practical_both}
\end{center}

The alternative to the model-free approach is to follow the model-based approach. It uses $\mathcal{X}_{hist}$ to evaluate a model that is believed to reflect the true behavior of asset returns. Then, such a model allows us to draw large simulated samples without being restricted by the size of the historical sample. The primary risk of this approach is the mis-specification of the true distribution of asset returns. We therefore want to consider three cases. The first one corresponds to a situation where the estimated model perfectly matches the true distribution of asset returns -- $\text{DGP}_{\text{true}}$. This case is not realistic but worth to analyze. Of course, we expect to obtain results very close to the reference portfolios. 
In the second case, we correctly specify the family of the true distribution of $X$: we assume that $\text{DGP}_{\text{true}}$ is really a mixture of two multivariate Student-t distributions. Naturally, in our case, well-specifying the parametric family of the $X$ distribution corresponds to the case where we fit a mixture of two multivariate Student-t distributions to $\mathcal{X}_{hist}$. This yields an estimated model $\text{DGP}_{\text{SM}}$.\footnote{We use the expectation-maximization algorithm with fixed degrees of freedom: $\nu_1=4.0$ and $\nu_2=2.5$.} For the third method, we mis-specify the family of the $X$ distribution. In our case, we fit a mixture of two multivariate Gaussian distributions to $\mathcal{X}_{hist}$ and obtain a ``wrong'' model $\text{DGP}_{\text{GM}}$. 
Obviously, any other parametric family can be assumed. The mis-specified case is here a Gaussian mixture so as not to excessively deviate from $\text{DGP}_{\text{true}}$.

\medskip
Table~\ref{simulation_methods} shows the results of the model-based approach. For the SGD and OSBGD methods, we run the algorithm using a (fixed) simulated sample $\mathcal{X}_{sim}$ of size $10^6$. We use a mini-batch size of 128 and stop it after 4 epochs for the SGD method. For the MSBGD method, the size of the sample $\mathcal{X}_{sim}$ -- that is repeatedly drawn over again at each iteration -- is chosen to be $10^5$ and we stop the algorithm after 60 iterations. The final estimator is computed by averaging the last 5 iterations.\footnote{No substantial improvement is observed after 4 epochs and 60 iterations for the SGD and MSBGD methods respectively. The sample size used in the MSBGD method is chosen to be $10^5$ because, using smaller sample sizes like $10^3$ and $10^4$, we cannot get accurate portfolios due to very poor quality of gradient approximation. On the other hand, a larger sample size like $10^6$ excessively slows down the process for a negligible improvement in terms of accuracy.}

\medskip
\begin{center}
\begin{adjustbox}{width=\textwidth}
\begin{tabular}{c|c|ccc|ccc}
\toprule
                       &     & \multicolumn{3}{c|}{Accuracy} & \multicolumn{3}{c}{Time} \\
                       &  $d$    &          SGD &         OSBGD &         MSBGD &           SGD &            OSBGD &          MSBGD \\
\midrule
Well-specified & 10  &  0.37 (0.10) &  0.35 (0.10) &  0.34 (0.08)  &   3.42 (0.12) &    12.82 (2.43) &   9.64 (0.35) \\  
(Parameters)               & 20  &  0.42 (0.09) &  0.41 (0.12) &  0.37 (0.12)  &   3.75 (0.06) &     26.00 (4.64) &   16.49 (0.89) \\
           & 50  &  0.52 (0.12) &  0.49 (0.12) &  0.38 (0.10)  &   4.46 (0.09) &    66.10 (13.01) &   32.77 (0.52) \\
               & 100 &  0.51 (0.04) &  0.49 (0.05) &  0.40 (0.12)  &   5.58 (0.06) &    141.21 (23.62) &   62.27 (1.19) \\ 
               & 200 &  0.53 (0.08) &  0.53 (0.10) &  0.40 (0.05)  &   9.05 (0.62) &  275.14 (52.35) &   123.96 (1.35) \\
               & 350 &  0.62 (0.09) &  0.54 (0.09) &  0.41 (0.08)  &   17.51 (0.76) &  522.43 (64.11) &   221.69 (1.04) \\

\midrule
Well-specified & 10  &   2.91 (1.30) &   2.91 (1.31) &    3.00 (1.36) &    3.69 (0.30) &     11.97 (2.43) &    9.18 (0.51) \\
 (Family)                      & 20  &  3.22 (2.03) &    3.20 (2.02) &   3.23 (1.87) &   3.81 (0.12) &     24.97 (5.54) &   16.79 (0.59) \\
                       & 50  &  3.91 (1.63) &   3.93 (1.64) &   3.93 (1.74) &   4.41 (0.05) &    67.41 (11.98) &    32.00 (1.32) \\
                       & 100 &  3.25 (1.36) &   3.24 (1.36) &   3.23 (1.37) &   5.52 (0.04) &   130.35 (14.46) &   60.73 (1.34) \\
                       & 200 &   3.70 (1.62) &   3.65 (1.64) &   3.61 (1.57) &    8.77 (0.90) &   278.52 (44.32) &  118.22 (1.19) \\
                       & 350 &  4.25 (0.81) &   4.03 (0.75) &   3.97 (0.75) &  16.41 (0.52) &   458.99 (53.37) &  213.48 (1.02) \\
\midrule
Mis-specified & 10  &  8.38 (3.01) &   8.39 (3.01) &   8.45 (2.99) &   3.63 (0.29) &     14.66 (3.08) &    9.08 (0.28) \\
                       & 20  &  8.01 (2.55) &   8.01 (2.54) &   8.01 (2.53) &   3.89 (0.17) &     29.98 (6.14) &   16.89 (0.68) \\
                       & 50  &  8.76 (2.75) &   8.77 (2.79) &    8.70 (2.78) &   4.44 (0.04) &   107.07 (80.33) &   32.53 (2.14) \\
                       & 100 &  8.55 (5.06) &   8.54 (5.07) &   8.49 (5.01) &    5.53 (0.10) &   156.53 (58.07) &   60.57 (1.48) \\
                       & 200 &  9.23 (3.94) &   9.22 (3.94) &   9.11 (3.74) &   8.85 (0.68) &   337.73 (94.27) &   118.50 (2.16) \\
                       & 350 &  10.45 (3.20) &   10.00 (2.99) &   10.01 (3.04) &  16.72 (0.65) &  637.16 (153.91) &  213.95 (1.08) \\
\bottomrule
\end{tabular}
\end{adjustbox}
 \captionof{table}{Accuracy of the Risk Budgeting portfolios obtained by the three methods for different numbers of assets for samples drawn from $\text{DGP}_{\text{True}}$, $\text{DGP}_{\text{SM}}$ and $\text{DGP}_{\text{GM}}$ and computation time of algorithms (in seconds). The accuracy measure corresponds to $100 \lVert \theta - \theta^{method} \rVert_1$. Figures correspond to means and standard deviations (in parentheses) computed by repeating the process $m=50$ times with samples drawn from $m$ different $\text{DGP}_{\text{True}}$, $\text{DGP}_{\text{SM}}$ and $\text{DGP}_{\text{GM}}$ for each $d$.}
\label{simulation_methods}
\end{center}

In Table~\ref{simulation_methods}, we observe that the model-based approach can yield more accurate results than the model-free approach whose results are shown in Table~\ref{practical_both}. It is of course true in the case of a perfect specification of the distribution of asset returns. Interestingly, this is still true in the case of a correct specification of the family of the $X$ distribution. However, we see that it generally does not hold any longer in the case of a mis-specified model. We can conclude that model-based methods can challenge model-free methods if we are confident about our choice of the family of the true distribution of asset returns. Another important result concerns the computation time of methods. The MSBGD method is faster than the OSBGD method since the gradient can be computed much faster when $\mathcal{X}_{sim}$ is of size $10^5$ compared to the case of OSBGD where it is of size~$10^6$. However, its performance in terms of speed is also far from being close to the performance of the SGD method especially for high-dimensional portfolios. The results show that SGD is the fastest among the three methods to obtain accurate portfolios independently of the number of assets if we adopt the model-based approach.

\subsection{Risk Budgeting for the allocation of negatively skewed assets}
Choosing the risk measure that correctly reflects the \textit{true} risk of an investment is an important task in portfolio construction. In this section, we build Risk Budgeting portfolios using the SGD method for the risk measures mentioned in Section~\ref{Extending_framework}. Our aim is to illustrate the impact of the choice of a risk measure on the Risk Budgeting portfolio using a simple example. We will consider a case in which asset returns are normally distributed and a second case where asset returns exhibit jump risk and are negatively skewed.
Indeed, negative skewness of asset returns distributions is generally associated with large negative jumps in asset prices. A reasonable modeling approach (see~\cite{bruder2016risk} for a similar model) is to assume a two-component mixture model. Each component represents a different state of the market, typically a ``normal'' state and a ``stressed'' state where the probability of a downward jump substantially increases.

\medskip
Here, consider three assets and a mixture of two multivariate Gaussian distributions: the joint density of their returns $X$ 
is given by 
\begin{equation*}
    f_X(x) = p \phi(x|\mu_1, \Sigma_1) + (1-p) \phi(x|\mu_2, \Sigma_2),
\label{g_density}
\end{equation*}
where $\phi(\mu,\Sigma)$ denotes the probability density function of a multivariate Gaussian distribution $\Nc(\mu,\Sigma)$ and $p$ is the probability of being in the ``normal'' market state. We consider the following parameters, ensuring that the parameters of the ``stressed'' state introduce negative skewness to some asset returns. The vector of expected returns in the normal market state is  $\mu_1 = (0.02, 0.06, 0.10)'$. In the stressed market state, the expected returns of some assets dramatically decrease: $\mu_2 = (-0.15, -0.30,  0.10)'$. The covariance matrices in the two different states are  

\begin{center}
$\Sigma_1\! =\! \begin{bmatrix}
0.0064 & 0.0080 & 0.0048 \\
0.0080 & 0.0400 & 0.0240 \\
0.0048 & 0.0240 & 0.0900 \\
\end{bmatrix}$ and
$\Sigma_2\! =\! \begin{bmatrix}
0.0289 & 0.0230 & 0.0048 \\
0.0230 & 0.0800 & 0.0240 \\
0.0048 & 0.0240 & 0.1000 \\
\end{bmatrix}$.
\end{center}

The marginal distributions of asset returns under these parameters are illustrated in Figure~\ref{asset_dists}.

\vspace{0mm}
\begin{figure}[!ht]
\hspace{-1.6cm}
     \includegraphics[width=1.2\textwidth]{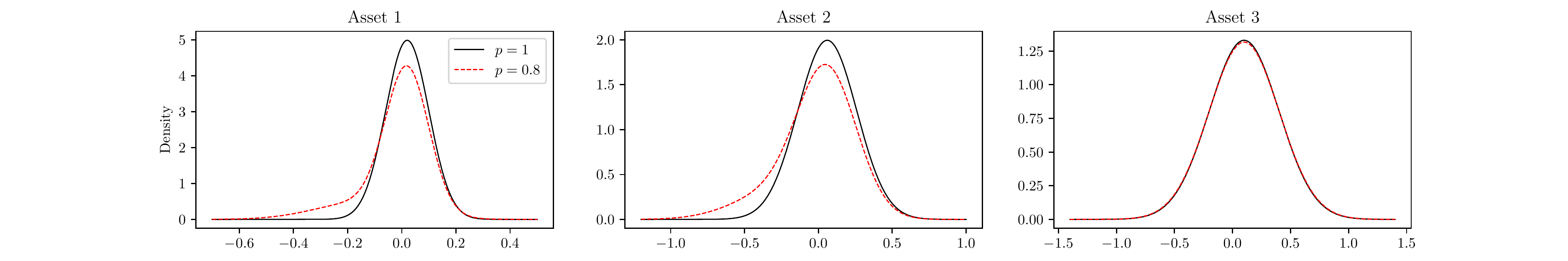}
      \caption{Marginal distributions of asset returns based on the chosen parameters when $p=1$ and $p=0.8$.}
       \label{asset_dists}
\end{figure}
\vspace{0mm}

We construct ERC portfolios with these three assets and several risk measures: volatility, mean absolute deviation around the median $\text{MAD}(Z) = \min_{\zeta \in \mathbb R} \mathbb E\big[|Z-\zeta|\big]$, Expected Shortfall $\text{ES}_{\alpha}$, spectral risk measure $\rho_{h}$ where the distortion function $h$ is a power function and variantile $\upsilon_{\alpha}(Z) = \min_{\zeta \in \mathbb R} \mathbb E\big[\alpha (Z - \zeta)_{+}^2 + (1-\alpha) \mathbb  (Z - \zeta)^2_{-} \big] $. We also examine the impact of adding or factoring out expected loss when relevant.

\medskip
To be more explicit, $\rho_{h}$ is based on the distortion function $h(s) = s^{1/c-1}/c$ for some $c\in(0,1]$. As $c$ gets close to zero, $h(s)$ attributes more weight to larger $s$ values, making the risk measure more sensitive to extreme losses. In this section, we consider the case $c=0.05$. This function $h$ might reflect the risk profile of an investor better than a step function -- as in the case of Expected Shortfall -- because it assigns increasing weights to larger losses in a smooth way.

\medskip
Table~\ref{risk_measures_combined} shows the estimated RB portfolios for $p \in \{0.8,1\}$.

\medskip

\begin{center}
\begin{tabular}{c|c|ccc}
\toprule
{} & risk measure & Asset 1 &  Asset 2 &  Asset 3 \\
\midrule
$p=1$   &   Volatility   &    0.60916 &  0.22200 &  0.16884 \\
   &   MAD          &    0.60951 &  0.22185 &  0.16864 \\
   &   $\text{ES}_{0.95} - \Eb$     &    0.60972 &  0.22190 &  0.16838 \\  
   &   $\rho_{h} - \Eb$   &        0.60969 &  0.22200 &  0.16831 \\
      &   $\text{MAD} + \Eb$      &    0.58872 &  0.22046 &  0.19082\\
   &   $\text{ES}_{0.95}$        &    0.60342 &  0.22168 &  0.17490 \\
   &   $\rho_{h}$   &        0.60252 &  0.22169 &  0.17579 \\
   &   $\upsilon_{0.99}$ &    0.59850 &  0.22138 &  0.18012 \\
\toprule
$p=0.8$   &   Volatility   &    0.52700 &  0.22882 &  0.24418 \\
   &   MAD          &    0.54790 &  0.22644 &  0.22566 \\
      &   $\text{ES}_{0.95} - \Eb$     &    0.46458 &  0.22612 &  0.30929 \\
   &   $\rho_{h} - \Eb$   &        0.47528 &  0.22727 &  0.29745 \\
   &   $\text{MAD} + \Eb$      &    0.45476 &  0.20345 &  0.34180\\
   &   $\text{ES}_{0.95}$        &    0.44055 &  0.21511 &  0.34434 \\

   &   $\rho_{h}$   &        0.44515 &  0.21510 &  0.33975 \\

   &   $\upsilon_{0.99}$ &    0.45719 &  0.21327 &  0.32954 \\
\bottomrule
\end{tabular}
\captionof{table}{Risk Budgeting portfolios for different risk measures under the assumption of normal ($p=1$) and negatively skewed (Gaussian mixture) asset returns ($p=0.8$).}
\label{risk_measures_combined}
\end{center} 

\mds

When $p=1$, asset returns follow a multivariate Gaussian distribution and do not exhibit skewness. We obtain very similar Risk Budgeting portfolios for all the risk measures insensitive to expected loss, i.e. volatility, MAD, $\text{ES}_{0.95} - \Eb$ and $\rho_{h} - \Eb$, because these risk measures are proportional to one another in the Gaussian case. When expected loss comes into play, it slightly impacts the allocation. Overall, in this case, there is no apparent advantage in using a risk measure different from volatility in the absence of negative skewness.

\medskip

When we introduce skewness by setting $p=0.8$, we obtain significantly different Risk Budgeting portfolios. Volatility and MAD seem to capture part of the higher risk induced by the likelihood of observing a stressed market. However, symmetrical deviation measures are not ideal to deal with skewed asset returns since they do not account for the direction of the asymmetry. Adding expected loss to MAD, i.e. using the risk measure $\text{MAD} + \Eb$, considerably impacts the allocation and tilts the weights in accordance with expected returns (and hence skewness). When we look at Expected Shortfall at $\alpha=0.95$, we observe a larger impact of skewness compared to the two previous symmetrical deviation measures. Factoring out expected loss from Expected Shortfall  (i.e. considering $\text{ES}_{0.95} - \Eb$) does not significantly impact the portfolio allocation because such an expected loss is very small relative to the large losses that Expected Shortfall captures. Using spectral risk measures $\rho_{h}$ and $\rho_{h} - \Eb$ yield portfolios which are similar to those obtained with Expected Shortfall. Similarly, the use of the extreme variantile $\upsilon_{0.99}$ allows to capture skewness risk.

\section*{Conclusion}
In this paper, we provide an analysis of the Risk Budgeting problem. First, we provide mathematical results that prove the existence of a unique solution to the Risk Budgeting problem. Then, in light of the rising interest for constructing Risk Budgeting portfolios for Expected Shortfall instead of volatility, we show that such a task can be performed using gradient descent tools when a mixture of multivariate Student-t distributions is assumed for asset returns. More generally, in model-based or model-free settings, this is still the case using stochastic gradient descent and by exploiting a variational characterization of Expected Shortfall. Beyond Expected Shortfall, we show that the Risk Budgeting problem actually boils down to a stochastic optimization problem for a wide range of popular risk measures. We provide numerical results that validate our theoretical findings and discuss the computational advantage associated with the stochastic optimization viewpoint introduced in this paper.

\section*{Acknowledgment}
Warmful thanks go to Stan Uryasev for a discussion on risk measures and quadrangles. The paper was presented at Université Paris 1 Panthéon-Sorbonne, Oxford University, King's College and University of Bologna during seminars or conferences. Interactions with the audience in these seminars helped us improve the paper. In particular, we were contacted by the authors of~\cite{freitas2022risk} following our talk at King's College who developed related ideas at the same time.

\section*{Data Availability Statement}

The data that support the findings of this study are available from the corresponding author upon reasonable request.

\newpage
\bibliographystyle{plain}

\begin{thebibliography}{10}

\bibitem{acerbi2002spectral}
Carlo Acerbi.
\newblock Spectral measures of risk: A coherent representation of subjective
  risk aversion.
\newblock {\em Journal of Banking \& Finance}, 26(7):1505--1518, 2002.

\bibitem{acerbi2002coherence}
Carlo Acerbi and Dirk Tasche.
\newblock On the coherence of expected shortfall.
\newblock {\em Journal of Banking \& Finance}, 26(7):1487--1503, 2002.

\bibitem{artzner1999coherent}
Philippe Artzner, Freddy Delbaen, Jean-Marc Eber, and David Heath.
\newblock Coherent measures of risk.
\newblock {\em Mathematical Finance}, 9(3):203--228, 1999.

\bibitem{barzilai1988two}
Jonathan Barzilai and Jonathan~M Borwein.
\newblock Two-point step size gradient methods.
\newblock {\em IMA journal of numerical analysis}, 8(1):141--148, 1988.

\bibitem{ben1986optimized}
Aharon Ben-Tal and Adi Ben-Israel.
\newblock Optimized certainty equivalents for decisions under uncertainty.
\newblock Technical report, Texas Univ. at Austin, Center For Cybernetic
  Studies, 1986.

\bibitem{ben2007old}
Aharon Ben-Tal and Marc Teboulle.
\newblock An old-new concept of convex risk measures: The optimized certainty
  equivalent.
\newblock {\em Mathematical Finance}, 17(3):449--476, 2007.

\bibitem{best1991sensitivity}
Michael~J Best and Robert~R Grauer.
\newblock On the sensitivity of mean-variance-efficient portfolios to changes
  in asset means: some analytical and computational results.
\newblock {\em The Review of Financial Studies}, 4(2):315--342, 1991.

\bibitem{black1992global}
Fischer Black and Robert Litterman.
\newblock Global portfolio optimization.
\newblock {\em Financial Analysts Journal}, 48(5):28--43, 1992.

\bibitem{bruder2016risk}
Benjamin Bruder, Nazar Kostyuchyk, and Thierry Roncalli.
\newblock Risk parity portfolios with skewness risk: An application to factor
  investing and alternative risk premia.
\newblock {\em Available at SSRN 2813384}, 2016.

\bibitem{bruder2012managing}
Benjamin Bruder and Thierry Roncalli.
\newblock Managing risk exposures using the risk budgeting approach.
\newblock {\em Available at SSRN 2009778}, 2012.

\bibitem{bun2017cleaning}
Jo{\"e}l Bun, Jean-Philippe Bouchaud, and Marc Potters.
\newblock Cleaning large correlation matrices: tools from random matrix theory.
\newblock {\em Physics Reports}, 666:1--109, 2017.

\bibitem{choueifaty2008toward}
Yves Choueifaty and Yves Coignard.
\newblock Toward maximum diversification.
\newblock {\em The Journal of Portfolio Management}, 35(1):40--51, 2008.

\bibitem{choueifaty2013properties}
Yves Choueifaty, Tristan Froidure, and Julien Reynier.
\newblock Properties of the most diversified portfolio.
\newblock {\em Journal of Investment Strategies}, 2(2):49--70, 2013.

\bibitem{contduguestcoignard}
Rama Cont, Romain Deguest, and Xue Dong He.
\newblock Loss-based risk measures.
\newblock {\em Statistics \& Risk Modeling}, 30(2):133--167, 2013.

\bibitem{demey2010risk}
Paul Demey, S{\'e}bastien Maillard, and Thierry Roncalli.
\newblock Risk-based indexation.
\newblock {\em Available at SSRN 1582998}, 2010.

\bibitem{demiguel2009optimal}
Victor DeMiguel, Lorenzo Garlappi, and Raman Uppal.
\newblock Optimal versus naive diversification: How inefficient is the 1/n
  portfolio strategy?
\newblock {\em The Review of Financial Studies}, 22(5):1915--1953, 2009.

\bibitem{embrechts2021bayes}
Paul Embrechts, Tiantian Mao, Qiuqi Wang, and Ruodu Wang.
\newblock Bayes risk, elicitability, and the expected shortfall.
\newblock {\em Mathematical Finance}, 31(4):1190--1217, 2021.

\bibitem{freitas2022risk}
Bernardo Freitas Paulo~da Costa, Silvana~M Pesenti, and Rodrigo Targino.
\newblock Risk budgeting portfolios from simulations.
\newblock {\em Available at SSRN 4038514}, 2022.

\bibitem{gava2021turning}
J{\'e}r{\^o}me Gava, Francisco Guevara, and Julien Turc.
\newblock Turning tail risks into tailwinds.
\newblock {\em The Journal of Portfolio Management}, 47(4):41--70, 2021.

\bibitem{gneiting2011making}
Tilmann Gneiting.
\newblock Making and evaluating point forecasts.
\newblock {\em Journal of the American Statistical Association},
  106(494):746--762, 2011.

\bibitem{griveau2013fast}
Th{\'e}ophile Griveau-Billion, Jean-Charles Richard, and Thierry Roncalli.
\newblock A fast algorithm for computing high-dimensional risk parity
  portfolios.
\newblock {\em Available at SSRN 2325255}, 2013.

\bibitem{hyndman1996sample}
Rob~J Hyndman and Yanan Fan.
\newblock Sample quantiles in statistical packages.
\newblock {\em The American Statistician}, 50(4):361--365, 1996.

\bibitem{laloux1999noise}
Laurent Laloux, Pierre Cizeau, Jean-Philippe Bouchaud, and Marc Potters.
\newblock Noise dressing of financial correlation matrices.
\newblock {\em Physical Review Letters}, 83(7):1467, 1999.

\bibitem{ledoit2004honey}
Olivier Ledoit and Michael Wolf.
\newblock Honey, i shrunk the sample covariance matrix.
\newblock {\em The Journal of Portfolio Management}, 30(4):110--119, 2004.

\bibitem{lemperiere2017risk}
Yves Lemperiere, Cyril Deremble, Trung-Tu Nguyen, Philip Seager, Marc Potters,
  and Jean-Philippe Bouchaud.
\newblock Risk premia: Asymmetric tail risks and excess returns.
\newblock {\em Quantitative Finance}, 17(1):1--14, 2017.

\bibitem{lezmi2018portfolio}
Edmond Lezmi, Hassan Malongo, Thierry Roncalli, and Rapha{\"e}l Sobotka.
\newblock Portfolio allocation with skewness risk: A practical guide.
\newblock {\em Available at SSRN 3201319}, 2018.

\bibitem{lintner1965security}
John Lintner.
\newblock Security prices, risk, and maximal gains from diversification.
\newblock {\em The Journal of Finance}, 20(4):587--615, 1965.

\bibitem{maillard2010properties}
S{\'e}bastien Maillard, Thierry Roncalli, and J{\'e}r{\^o}me Te{\"\i}letche.
\newblock The properties of equally weighted risk contribution portfolios.
\newblock {\em The Journal of Portfolio Management}, 36(4):60--70, 2010.

\bibitem{markowitz1952}
Harry Markowitz.
\newblock Portfolio selection.
\newblock {\em The Journal of Finance}, 7(1):77--91, 1952.

\bibitem{mossin1966equilibrium}
Jan Mossin.
\newblock Equilibrium in a capital asset market.
\newblock {\em Econometrica: Journal of the Econometric Society}, pages
  768--783, 1966.

\bibitem{rietz1988equity}
Thomas~A Rietz.
\newblock The equity risk premium a solution.
\newblock {\em Journal of Monetary Economics}, 22(1):117--131, 1988.

\bibitem{rockafellar2000optimization}
R~Tyrrell Rockafellar and Stanislav Uryasev.
\newblock Optimization of conditional value-at-risk.
\newblock {\em Journal of Risk}, 2:21--42, 2000.

\bibitem{rockafellar2002conditional}
R~Tyrrell Rockafellar and Stanislav Uryasev.
\newblock Conditional value-at-risk for general loss distributions.
\newblock {\em Journal of Banking \& Finance}, 26(7):1443--1471, 2002.


\bibitem{rockafellar2002deviation}
R~Tyrrell Rockafellar, Stanislav~P Uryasev, and Michael Zabarankin.
\newblock Deviation measures in risk analysis and optimization.
\newblock {\em University of Florida, Department of Industrial \& Systems
  Engineering Working Paper}, (2002-7), 2002.

\bibitem{rockafellar2006generalized}
R~Tyrrell Rockafellar, Stan Uryasev, and Michael Zabarankin.
\newblock Generalized deviations in risk analysis.
\newblock {\em Finance and Stochastics}, 10(1):51--74, 2006.

\bibitem{rockafellar2008risk}
R~Tyrrell Rockafellar, Stan Uryasev, and Michael Zabarankin.
\newblock Risk tuning with generalized linear regression.
\newblock {\em Mathematics of Operations Research}, 33(3):712--729, 2008.

\bibitem{rockafellar2013fundamental}
R~Tyrrell Rockafellar and Stan Uryasev.
\newblock The fundamental risk quadrangle in risk management, optimization and
  statistical estimation.
\newblock {\em Surveys in Operations Research and Management Science},
  18(1-2):33--53, 2013.

\bibitem{rockafellar2018superquantile}
R~Tyrrell Rockafellar and Johannes O. Royset.
\newblock Superquantile/CVaR risk measures: Second-order theory.
\newblock {\em Annals of Operations Research},
  262:3--28, 2018.

\bibitem{roncalli2013introduction}
Thierry Roncalli.
\newblock {\em Introduction to risk parity and budgeting}.
\newblock CRC Press, 2013.

\bibitem{roncalli2014introducing}
Thierry Roncalli.
\newblock Introducing expected returns into risk parity portfolios: A new
  framework for asset allocation.
\newblock {\em Available at SSRN 2321309}, 2014.

\bibitem{sharpe1964capital}
William~F Sharpe.
\newblock Capital asset prices: A theory of market equilibrium under conditions
  of risk.
\newblock {\em The Journal of Finance}, 19(3):425--442, 1964.

\bibitem{spinu2013algorithm}
Florin Spinu.
\newblock An algorithm for computing risk parity weights.
\newblock {\em Available at SSRN 2297383}, 2013.

\bibitem{tasche2007capital}
Dirk Tasche.
\newblock Capital allocation to business units and sub-portfolios: the {E}uler
  principle.
\newblock {\em arXiv preprint arXiv:0708.2542}, 2007.

\bibitem{tobin1958liquidity}
James Tobin.
\newblock Liquidity preference as behavior towards risk.
\newblock {\em The Review of Economic Studies}, 25(2):65--86, 1958.

\bibitem{Treynor1962JackT}
Jack~L Treynor.
\newblock Toward a theory of market value of risky assets.
\newblock {\em Econometric Modeling: Capital Markets}, 1962.

\bibitem{wang2020risk}
Ruodu Wang and Yunran Wei.
\newblock Risk functionals with convex level sets.
\newblock {\em Mathematical Finance}, 30(4):1337--1367, 2020.

\end{thebibliography}

\newpage
\appendix
\section{Description of the different procedures to evaluate the performances of SGD, OSBGD and MSBGD by simulation}
\label{appendixA}


\medskip
\begin{itemize}
\item We define $\text{DGP}_{\text{true}}$ that can generate $d-$dimensional return vectors. Since it is not convenient and realistic to manually define the parameters of $\text{DGP}_{\text{true}}$, as the number of parameters rapidly increases with respect to $d$, we use the following systematic approach: for a given $d$, we randomly select $d$ stocks from the S\&P500 Index components (in April 2022) and fit a  mixture of two multivariate Student-t distributions using the expectation-maximization algorithm on daily return data (August 2008--April 2022) with fixed degrees of freedom ($\nu_1=4.0$ and $\nu_2=2.5$). The estimated values of $p$, $\mu_1$, $\mu_2$, $\Lambda_1$ and $\Lambda_2$ are then used as the parameters of $\text{DGP}_{\text{true}}$.
\item We draw the sample $\mathcal{X}_{hist}$ of size $n=3500$ from $\text{DGP}_{\text{true}}$. 
\item From $\mathcal{X}_{hist}$, we propose two main approaches to compute the Risk Budgeting portfolio:
\begin{itemize}
\item Model-free approach, where we rely on
\begin{itemize}
\item SGD to solve Problem~\eqref{stocRB}, using $\mathcal{X}_{hist}$;
\item OSBGD to solve Problem~\eqref{empRB}, using $\mathcal{X}_{hist}$ to compute the gradient at each iteration.
\end{itemize}
\item Model-based approach, where we first estimate a model using $\mathcal{X}_{hist}$ to define $\text{DGP}_{\text{est}}$. Then, invoke
\begin{itemize}
\item SGD to solve Problem~\eqref{stocRB}, using a sample $\mathcal{X}_{sim}$ of size $10^6$ drawn from $\text{DGP}_{\text{est}}$;
\item OSBGD to solve Problem~\eqref{empRB}, using always the same sample $\mathcal{X}_{sim}$ of size $10^6$ initially drawn from $\text{DGP}_{\text{est}}$ to compute the gradient at each iteration;
\item MSBGD to solve Problem~\eqref{empRB}, using a new sample $\mathcal{X}_{sim}$ of size $10^5$ drawn from $\text{DGP}_{\text{est}}$ repeatedly to compute the gradient at each iteration.\\
\end{itemize}
In the model-based approach, estimating $\text{DGP}_{\text{est}}$ using $\mathcal{X}_{hist}$ is the key step. Ideally, the ultimate goal is to get a model as close as possible to $\text{DGP}_{\text{true}}$. In this paper, we consider three different situations:
\begin{itemize}
\item we perfectly estimate the model, i.e. $\text{DGP}_{\text{est}}$ is formally equivalent to $\text{DGP}_{\text{true}}$;
\item we correctly specify the family of the $X$ distribution which means that we fit a  mixture of two multivariate Student-t distributions applying the expectation-maximization algorithm on $\mathcal{X}_{hist}$ with fixed degrees of freedom ($\nu_1=4.0$ and $\nu_2=2.5$); the estimated models is denoted $\text{DGP}_{\text{SM}}$; 
\item we do not correctly specify the family of the $X$ distribution and fit a  mixture of two multivariate Gaussian distributions applying the expectation-maximization algorithm on $\mathcal{X}_{hist}$; the estimated model is denoted $\text{DGP}_{\text{GM}}$.
\end{itemize}
\end{itemize}
\end{itemize}

\end{document}